\let\newfloat\newfloat@ltx
\newcommand{\tr}{{\rm Tr}}
\newcommand{\herm}{{\rm HERM}}
\newcommand{\CSPO}{{\rm CSPO}}
\newcommand{\CSPSC}{{\rm CSPSC}}
\newcommand{\STAB}{{\rm STAB}}
\newcommand{\cost}{{\rm COST}}
\newcommand{\qcost}{{\rm qCOST}}
\newcommand{\distill}{{\rm DISTILL}}
\newcommand{\LR}{{\rm LR}}
\newcommand{\J}{\mathbf{J}}
\newcommand{\CPTP}{{\rm CPTP}}
\newcommand{\Hmin}{D_{\min}^{\rm CSPO}}
\newcommand{\tJ}{\tilde{J}}
\newcommand{\bea}{\begin{eqnarray}}
\newcommand{\eea}{\end{eqnarray}}
\newcommand{\be}{\begin{equation}}
\newcommand{\ee}{\end{equation}}
\newcommand{\ba}{\begin{equation}\begin{aligned}}
\newcommand{\ea}{\end{aligned}\end{equation}}
\newcommand{\eqs}[1]{\begin{align}#1\end{align}}
\newcommand{\red}{\textcolor{red}}
\newtheorem{theorem}{Theorem}
\newtheorem{lemma}{Lemma}
\newtheorem{proposition}{Proposition}
\newtheorem{remark}{Remark}
\newtheorem*{definition}{Definition}
\newtheorem*{definition*}{Definition}
\theoremstyle{definition}
\def\1{\mathds{1}}
\def\id{\mathsf{id}}
\def\mD{\mathcal{D}}
\def\mE{\mathcal{E}}
\def\mF{\mathcal{F}}
\def\mM{\mathcal{M}}
\def\mN{\mathcal{N}}
\def\mP{\mathcal{P}}
\def\mQ{\mathcal{Q}}
\def\mS{\mathcal{S}}
\def\mU{\mathcal{U}}
\def\mc{\mathfrak{C}}
\def\md{\mathfrak{D}}
\def\ms{\mathfrak{S}}
\def\mf{\mathfrak{F}}
\def\({\left(}
\def\){\right)}
\def\[{\left[}
\def\]{\right]}
\let\Hy@backout\@gobble
\begin{document}

\title{Quantifying dynamical magic with completely stabilizer preserving operations as free}
	
\author{Gaurav Saxena}
\email{gaurav.saxena1@ucalgary.ca}	
\affiliation{
Department of Physics and Astronomy, Institute for Quantum Science and Technology,
University of Calgary, AB, Canada T2N 1N4}

\author{Gilad Gour}
\email{gour@ucalgary.ca}
\affiliation{
Department of Mathematics and Statistics, Institute for Quantum Science and Technology,
University of Calgary, AB, Canada T2N 1N4}

\date{\today}
	
\begin{abstract}
In this paper, we extend the resource theory of magic to the channel case by considering completely stabilizer preserving operations (CSPOs) as free.
We introduce and characterize the set of CSPO preserving and completely CSPO preserving superchannels.
We quantify the magic of quantum channels by extending the generalized robustness and the min relative entropy of magic from the state to the channel domain and show that they bound the single-shot dynamical magic cost and distillation.
We also provide analytical conditions for qubit interconversion under CSPOs and show that it is a linear programming feasibility problem and hence can be efficiently solved.
Lastly, we give a classical simulation algorithm whose runtime is related to the generalized robustness of magic for channels. 
Our algorithm depends on some pre-defined precision, and if there is no bound on the desired precision then it achieves a constant runtime.
\end{abstract}

\maketitle

\section{Introduction}

In recent years, several schemes have been developed to achieve fault-tolerant quantum computation, and most of them use the stabilizer formalism \cite{ Shor96, G97, J97, G09,CTV17}.
The stabilizer formalism consists of the preparation of stabilizer states, application of Clifford gates, and measurements in the computational basis.
Within this formalism, pure non-stabilizer states (popularly known as magic states) are used as a resource to promote fault-tolerant quantum computation to universal quantum computation.
This model of quantum computation is known as the magic state model of quantum computation and finding magic distillation rates and estimating classical simulation cost of quantum circuits are active areas of research in this field~\cite{BK05,CB10,   FMM+12, MEK12, J13, CW17, CTV17,HH18,CC19, L19,WWS20, G05, VFG+12, S14, HWV+14, VMG+14,PWB15,DAP+15,BG16, HC17,BDB+17,RBD+17,BFH+17, BB+19, SC19, RLC+19, TRW21, LOH22}.
While formulating optimal rates promise better distillation protocols, improved classical simulations help benchmark the computational speedups offered by quantum computers~\cite{R05,BH12, ACB12,CAB12,E13,HHP+17,HC17, BFH+17,TBG17, KT19, HG19, SC19,  AAB+19,SRP+20}.
It follows from the Gottesman-Knill theorem that it is possible to efficiently simulate any stabilizer circuit on a classical computer, hence rendering stabilizer states and operations useless for universal quantum computation~ \cite{GK98, AG04}.
For this reason, this model fits the mold of quantum resource theories where all the states and operations that cannot provide any quantum advantage are treated as free~\cite{CG19,HWV+14, HC17, A19,SCG20, GAR+21, WWS19}.

Using the above criterion to define free elements, considerable work has been directed towards developing the resource theory of magic~\cite{ VFG+12,VMG+14,DAP+15, HC17, ADG+18,RLC+19,SC19,HG19, WWS19, LW20,SRP+20,FL20, HFH+21, TRW21}. In this process, two branches have emerged: one branch deals with odd d-dimensional qudits, and the other branch deals with the practically important case of multi-qubit systems. In the former case, a clear connection between quantum speedup and the negativity of the Wigner representation of the state/channel has been established~\cite{ G05, G06, G07,VFG+12, ME12,VMG+14,BSS16,DOB+17, WWS19, WWS20,KJ21}.
However, in the latter case, a discrete phase space approach cannot be cleanly applied without restricting free states to some subset of stabilizer states or excluding some Clifford operations~\cite{CGG+06,HWV+14,DAP+15, KL17, RBD+17,RBT+20, ZOR20, ROZ+21}.
Thus, to retain all stabilizer states and operations as free elements (in the multiqubit scenario), alternative approaches have been taken~\cite{BFH+17, S14,HC17,BSS16,BG16,BB+19,QWE19,SRP+20,SC19, ZOR20, HFH+21, OZR21}. 

In~\cite{HC17}, Howard and Campbell presented a scheme where all density matrices are decomposed as real linear combinations of pure stabilizer states.
Borrowing the idea from the resource theory of entanglement~\cite{VT99}, they introduced the robustness of magic which is the minimum $\ell_1$-norm of all such decompositions.
They showed that it is a resource monotone under all stabilizer operations and linked it to the runtime of a classical simulation algorithm, thus giving robustness of magic an operational meaning.
Using robustness of magic, they also formulated lower bounds on the cost of synthesizing magic gates.
Taking this approach forward, Seddon and Campbell enlarged the set of free operations from stabilizer operations to the set of completely stabilizer preserving operations (CSPOs) and introduced channel robustness of magic for multi-qubit channels~\cite{SC19}. 
They decomposed a channel as a linear combination of CSPOs and defined channel robustness as the minimum $\ell_1$-norm of all such decompositions. They also formulated a classical algorithm and linked its runtime with the channel robustness thus efficiently simulating a circuit consisting of CSPOs. 

Since CSPOs cannot provide any quantum advantage, we extend the resource theory of magic to the channel case by treating CSPOs as free.
We introduce two sets of free superchannels, CSPO preserving superchannels and completely CSPO preserving superchannels, to manipulate quantum channels.
Since there is no physical restriction over such sets of free superchannels, they are useful in finding fundamental limitations on the ability of a quantum channel to generate magic. 
Besides, studying such superchannels
gives us no-go results in resource interconversion tasks involving more restricted type of operations such as the set of stabilizer operations.
 
This paper is organized as follows.
In section \ref{sec:free_elements}, we define and characterize the two above-mentioned sets of free superchannels.
Then in section \ref{sec:magic_measures}, we generalize the key operational magic monotones defined for states to the channel domain, namely the generalized robustness of magic and the min relative entropy of magic.
Using these monotones, in section \ref{sec:interconversions}, we formulate single shot bounds on distilling magic from a quantum channel and the magic cost of simulating a channel under the free superchannels.
However, due to the complexity in determining whether a state is a stabilizer state or not~\cite{G06, HC17, GMC14}, finding the lower bound on distillation under completely CSPO preserving superchannels is still an open problem.
In section \ref{sec:interconversions}, we also show that interconversion among single-qubit states under CSPOs is an SDP feasibility problem and hence, can be efficiently solved.
As our last result, in section \ref{sec:classical_algo}, we provide an algorithm to classically simulate a general quantum circuit and relate the runtime of this algorithm to the generalized robustness of magic for channels.
Our algorithm is designed such that its runtime varies according to the desired precision and if there is no bound on the desired precision, the algorithm runs in constant time.

\section{Preliminaries}\label{sec:prelim}
\subsection{Notations}
In this paper, we denote all static systems using uppercase English letters and with a numerical subscript, like $A_1, B_0, R_1$, etc., and these systems will be considered as qubit (or multi-qubit) systems unless otherwise specified.
Dynamical systems will simply be denoted by English capital letters like $A, B, R$, etc., and this notation for a dynamical system, say $A$, would indicate a pair of systems such that $A = (A_0, A_1) = (A_0\to A_1)$.
The set of Hermitian matrices on system $A_1$ will be denoted by $\herm(A_1)$.
The set of density matrices on a system, say $B_1$, will be represented by $\md(B_1)$.
We will use $\psi$ and $\phi$ for pure states, and $\rho$ and $\sigma$ will be used for mixed states.
The set of all stabilizer states in system $A_1$ will be denoted by $\STAB(A_1)$.
For pure stabilizer states in system $A_1$ we will write $\phi\in \STAB(A_1)$, and notation like $\sigma\in \STAB(A_1)$ will mean a density matrix of a state taken from the stabilizer polytope which is a convex hull of pure stabilizer states.
The maximally entangled state and the unnormalized maximally entangled state on the composite systems $A_1\tilde{A}_1$ will be denoted by $\phi^+_{A_1\tilde{A}_1}$ and  $\Phi^+_{A_1\tilde{A}_1}$, respectively, where we used the tilde symbol to denote a replica of the system $A_1$.
To denote the dimension of a system, two vertical lines will be used. For example, the dimension of $B_0$ is $|B_0|$.

The set of quantum channels or completely positive and trace preserving (CPTP) maps on a dynamical system $A$ will be denoted by $\CPTP(A)$ or $\CPTP(A_0\to A_1)$.
To represent channels, calligraphic letters like $\mE, \mN,$ etc. will be used.
The notation $\mN_A$ or $\mN\in \CPTP(A_0\to A_1)$ will mean that the quantum channel $\mN$ takes an input state in $A_0$ to an output state in $A_1$.
The evolution of quantum channels is described by superchannels.
A brief discussion of superchannels is provided in Appendix~\ref{superchannel_desription} for completeness.
We will use uppercase Greek letters like $\Theta,\Omega,$ etc., to represent superchannels.
We will denote the set of superchannels by $\ms(A\to B)$ such that $\Theta\in \ms(A\to B)$ implies that the superchannel $\Theta$ takes a dynamical system in $A$ to a dynamical system in $B$.
The Choi matrix of a channel $\mN\in \CPTP(A_0\to A_1)$ is defined as $J^{\mN}_A := \mN_{A}\left(\Phi^+_{A_0\tilde{A_0}} \right)$,
where in the notation $J^{\mN}_A$, the subscript denotes the dynamical system $A=(A_0,A_1)$.
The Choi matrix of a superchannel $\Theta\in \ms(A\to B)$ will be denoted in bold as $\J^{\Theta}_{AB}$.
To denote normalized Choi matrix of a channel $\mN_A$, we will use tilde symbol over $J$ as $\tJ^{\mN}_A$.

\subsection{Stabilizer Formalism}

In this subsection, we give a brief overview of the stabilizer formalism.
For single-qubit systems, the Pauli group consists of Pauli matrices and the identity matrix, together with multiplicative factors $\pm 1,\, \pm i$. We will denote this group as $\mP_1 = (\pm 1, \pm i)\{I, X,Y,Z\}$.
For multi-qubit systems, general Pauli group on $n$-qubits consists of all n-fold tensor products of Pauli matrices (including identity), together with the multiplication factors $\pm 1, \pm i$.
We will denote the n-qubit Pauli group as $\mP_n$.
We say a pure, $n$-qubit state $|\psi\rangle$ is a stabilizer state if there exists an Abelian subgroup of the Pauli group $\mS\subset\mP_n$ such that $S|\psi\rangle = |\psi\rangle$ for all $S\in \mS$. The elements of the subgroup $\mS$ are called stabilizers of $|\psi\rangle$, and the total number of elements in $\mS$ is equal to $2^n$.
For example, the Pauli matrix $Z$ is the stabilizer of state $|0\rangle$.
For single-qubit states, there are six pure stabilizer states with the following stabilizers
\eqs{
\pm X |\pm\rangle &= |\pm \rangle\\
\pm Y|\pm i\rangle &= |\pm i\rangle\\
Z|0\rangle &= |0\rangle\\
-Z|1\rangle  &= |1\rangle\; .
}
The mixed stabilizer states of a system $A_1$ are defined as convex combination of pure stabilizer states. 
We can also define the set of stabilizer states using Clifford unitaries which are the unitaries that preserve the Pauli group under conjugation. 
Let $U$ represent an element of Clifford unitaries such that $UPU^{\dagger} \in \mP_n$ for all $P\in \mP_n$.
Then the set of stabilizer states can be represented as ${\rm conv}\{U|0\rangle\langle 0|U^{\dagger}: U \in {\rm Clifford}  \}$.
Evolution of stabilizer states under Clifford unitaries can be  efficiently tracked classically.
Further, even the measurement of Pauli operators on stabilizer states can be efficiently simulated \cite{GK98,AG04}.
A quantum circuit that comprises of Clifford unitaries, Pauli measurements, and classical randomness and conditioning, is known as a stabilizer circuit. 
The usefulness of the stabilizer formalism comes in quantum error correction and in efficiently simulating stabilizer circuits classically~\cite{GK98}.

\section{Completely stabilizer preserving operations (CSPO), CSPO preserving superchannels, and Completely CSPO preserving superchannels}\label{sec:free_elements}

The set of completely stabilizer preserving operations or CSPOs was introduced in \cite{SC19} and comprises of all the quantum operations that preserve stabilizer states in a complete sense. The set of completely stabilizer preserving operations taking system $A_0$ to system $A_1$ will be denoted by $\CSPO(A_0\to A_1)$ or $\CSPO(A)$.
Let $\mE_A\in \CPTP(A)$. Then $\mE_A$ is a completely stabilizer preserving operation if for any system $R_0$ it holds that
\eqs{
\mE_A(\rho_{R_0A_0})\in \STAB(R_0A_1)\;\; \forall \; \rho_{R_0A_0}\in \STAB(R_0A_0)\, .
}
These operations can alternatively be defined using their Choi matrices as follows
\be \label{CSPO_Choi_defn}
\mE_{A}\in \CSPO(A) \iff \dfrac{J^{\mE}_{A}}{|A_0|} \in \STAB(A)\, .
\ee
In \cite{SC19}, it was also shown that the action of CSPOs on a stabilizer state can be efficiently simulated classically.
This set is the largest known set of operations in the multi-qubit scenario that do not provide any quantum advantage and as such they  are perfect candidates for the free channels of a dynamical resource theory of magic.
To manipulate quantum channels, we choose the two natural sets of superchannels -- namely, the set of CSPO preserving superchannels and the set of completely CSPO preserving superchannels -- as the set of free superchannels in our work.
We will denote the set of CSPO preserving superchannels taking dynamical system $A$ to dynamical system $B$ by $\mf_1(A\to B)$ and the set of completely CSPO preserving superchannels taking dynamical system $A$ to dynamical system $B$ by $\mf_2(A\to B)$.
In the following two subsections we define and characterize the two sets of free superchannels.

\subsection{CSPO preserving superchannels}
\begin{definition}
Given two dynamical systems $A$ and $B$, a superchannel $\Theta\in \ms(A\to B)$  is said to be CSPO preserving superchannel if
\begin{equation}
\Theta_{A\to B}[\mN_A]\in \CSPO(B) \;\; \forall\; \mN_A\in \CSPO(A)\,.
\end{equation}
\end{definition}
Let $\{W_j\}$ be the set of stabilizer witnesses for system $B_0B_1$. 
Then, using the above definition and the set of stabilizer witnesses, we can characterize the set of CSPO preserving superchannels using their Choi matrices as follows.
The Choi matrix of a superchannel $\Theta\in \mf_1(A\to B)$ must satisfy the following conditions
\eqs{
&\J^{\Theta}_{AB}\geq 0\; ,\\
&\J^{\Theta}_{AB_0} = \J^{\Theta}_{A_0B_0}\otimes \frac{I_{A_1}}{|A_1|}\; ,\\
&\J^{\Theta}_{A_1B_0} = I_{A_1B_0}\; ,\\
&\tr\left[\J^{\Theta}_{AB}(\phi_i\otimes W_j) \right]\geq 0\quad \forall\; \phi_i \in \STAB(A_0A_1), W_j\;\label{cond:stab} .
}
In the above, the first three conditions follow from the requirement of $\Theta$ to be a superchannel \cite{G19}.
The condition in equation~\eqref{cond:stab} simply uses the fact that if a CSPO preserving superchannel takes the extreme points of the stabilizer polytope to a stabilizer state, then it will also take any convex combination of them to a stabilizer state.
However, finding all stabilizer witnesses is a hard problem, but for small dimensions, they can be found and the above characterization can be used as a set of conditions in resource interconversion tasks formulated as conic optimization problems.

\subsection{Completely CSPO preserving superchannels}
\begin{definition}

Given two dynamical systems $A$ and $B$, a superchannel $\Theta\in \ms(A\to B)$ is said to be completely CSPO preserving if
\be
 \Theta_{A\to B}[\mN_{AR}]\in \CSPO(BR)\quad\forall\; \mN\in\CSPO(AR)
\ee
\end{definition}
In other words, a superchannel is completely CSPO preserving if, for every input CSPO, the output is also CSPO, even if the superchannel acts only on a subsystem of the input channel.

\begin{theorem}
Let $\Theta \in \ms(A\to B)$. Then $\Theta\in \mf_2(A\to B)$ if and only if
\begin{equation}
    \dfrac{1}{|A_1B_0|}\J^{\Theta}_{AB} \in \STAB(AB)
\end{equation}
\end{theorem}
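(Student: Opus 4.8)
The plan is to reduce both implications to a single structural fact: on multi-qubit systems the (unnormalized) maximally entangled state $\Phi^+$ is a stabilizer state, so the Choi-level operations that implement ``apply a superchannel to a channel'' are stabilizer preserving. The central tool I would record first is the composition rule from the superchannel formalism~\cite{G19}: for $\mN\in\CPTP(A_0R_0\to A_1R_1)$, the Choi matrix of the output channel $\Theta[\mN]$ is the link product of $\J^{\Theta}_{AB}$ and $J^{\mN}_{AR}$ over the shared dynamical system $A$, i.e.\ a partial trace over $A$ of the product of $\J^{\Theta}_{AB}$ with a partial transpose of $J^{\mN}_{AR}$ on $A$. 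Both directions are then read off from this formula together with the characterization~\eqref{CSPO_Choi_defn} of CSPOs in terms of their Choi matrices.

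For the ``only if'' direction I would construct a single free channel that regenerates $\J^{\Theta}_{AB}$. Take the reference $R=(R_0,R_1)$ to be a replica of the wires of $A$ with $R_0\cong A_1$ and $R_1\cong A_0$, and let $\mN^{\star}\in\CPTP(A_0R_0\to A_1R_1)$ be the channel whose Choi matrix is the product $\Phi^+_{A_0R_1}\otimes\Phi^+_{A_1R_0}$ of two maximally entangled states. Since each $\Phi^+$ is a stabilizer state and $\STAB$ is closed under tensor products, $\tfrac{1}{|A_0R_0|}J^{\mN^{\star}}_{AR}\in\STAB(AR)$, so $\mN^{\star}\in\CSPO(AR)$ by~\eqref{CSPO_Choi_defn}. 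The key identity -- the superchannel analogue of $J^{\mN}=(\mN\otimes\id)\Phi^+$ -- is that $J^{\Theta[\mN^{\star}]}_{BR}=\J^{\Theta}_{AB}$ once $R_0,R_1$ are identified with $A_1,A_0$. Because $\Theta\in\mf_2$, we have $\Theta[\mN^{\star}]\in\CSPO(BR)$, hence $\tfrac{1}{|B_0R_0|}J^{\Theta[\mN^{\star}]}_{BR}\in\STAB(BR)$; and since $|B_0R_0|=|B_0||A_1|=|A_1B_0|$, this is exactly $\tfrac{1}{|A_1B_0|}\J^{\Theta}_{AB}\in\STAB(AB)$.

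For the ``if'' direction I would fix an arbitrary $\mN\in\CSPO(AR)$, so $\tfrac{1}{|A_0R_0|}J^{\mN}_{AR}\in\STAB(AR)$, and insert the hypothesis $\tfrac{1}{|A_1B_0|}\J^{\Theta}_{AB}\in\STAB(AB)$ into the composition rule. The plan is to recast the link product as follows: form the tensor product of the two stabilizer states on disjoint copies of the shared system $A$, project the two copies onto the maximally entangled stabilizer state $\phi^+$, and trace them out. Each elementary step here -- tensoring, projection onto a stabilizer state with post-selection, and partial trace -- maps the stabilizer polytope into itself, so the normalized output Choi matrix $\tfrac{1}{|B_0R_0|}J^{\Theta[\mN]}_{BR}$ lies in $\STAB(BR)$, i.e.\ $\Theta[\mN]\in\CSPO(BR)$. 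As $\mN$ and $R$ are arbitrary, this yields $\Theta\in\mf_2$.

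I expect the reverse direction to be the main obstacle: one must show that contracting two stabilizer Choi operators over the shared system produces a normalized operator that lies in the stabilizer \emph{polytope} $\STAB(BR)$, not merely a positive semidefinite operator, so the post-selection and normalization bookkeeping of the Bell-type contraction has to be handled carefully (and it is precisely here that $\phi^+$ being a stabilizer state is essential). In the forward direction the corresponding care lies in verifying the Choi identity $J^{\Theta[\mN^{\star}]}_{BR}=\J^{\Theta}_{AB}$ and that $\mN^{\star}$ is genuinely trace preserving, which amounts to checking it against the superchannel marginal constraints imposed on $\J^{\Theta}_{AB}$.
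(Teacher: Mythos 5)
Your proposal is correct, and its two halves relate differently to the paper's proof. The forward direction is essentially the paper's argument in different clothing: the paper also feeds a maximally-entangled-state-built CSPO into $\Theta$, namely $\mN(\rho_{A_0})=\rho_{A_0}\otimes\phi^+_{A_1\tilde{A}_1}$, and then evaluates $\Theta[\mN]$ on $\phi^+_{B_0\tilde{B}_0}$; your swap channel $\mN^{\star}$ packages the same Bell-state bookkeeping into a single Choi-matrix identity, and both constructions recover $\tfrac{1}{|A_1B_0|}\J^{\Theta}_{AB}$ as the normalized Choi matrix of $\Theta$ applied to a free channel. The reverse direction is where you genuinely depart from the paper. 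The paper argues by contraposition using stabilizer witnesses: if $\Theta[\mE]\notin\CSPO(BR)$ for some CSPO $\mE$, a witness $W_{B\tilde{R}}$ detects the Choi matrix of $\Theta[\mE]$, and after inserting the composition rule and decomposing $\tJ^{\mE}_{AR}$ into pure stabilizer states, the operator $\phi_{AR}\otimes W_{B\tilde{R}}$ becomes a witness certifying $\tfrac{1}{|A_1B_0|}\J^{\Theta}_{AB}\notin\STAB(AB)$. You instead prove the implication directly: write $J^{\Theta[\mN]}_{BR}$ as a Bell contraction of $\J^{\Theta}_{AB}\otimes J^{\mN}_{\tilde{A}R}$ over the doubled system $A\tilde{A}$, and use that the stabilizer polytope is closed under tensor products, under post-selected projection onto $\phi^+_{A\tilde{A}}$ (a product of commuting Pauli projections), and under partial trace. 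The normalization issue you flag resolves itself precisely because the theorem assumes $\Theta\in\ms(A\to B)$: then $\Theta[\mN]$ is trace preserving, so the contraction ``success probability'' is the fixed nonzero constant $1/|A_0A_1|^2$, and the renormalized post-selected state is exactly $\tfrac{1}{|B_0R_0|}J^{\Theta[\mN]}_{BR}$. As for what each approach buys: your direct route makes constructively transparent why stabilizer-ness of $\J^{\Theta}$ forces stabilizer-ness of every output Choi matrix, at the cost of proving the polytope-closure lemmas for post-selected Pauli measurements and partial trace; the paper's witness route needs only the cited fact that $\phi_{AR}\otimes W_{B\tilde{R}}$ is again a stabilizer witness, but is less explicit since it passes through the contrapositive.
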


\begin{proof}
We first prove that if $\Theta$ is a completely CSPO preserving superchannel (i.e., belongs to $\mf_2(A\to B)$), then its normalized Choi matrix is a stabilizer state.
For the other side, we show that if a superchannel $\Theta$ is not a completely CSPO preserving superchannel, then its normalized Choi matrix is not a stabilizer state.

Let $\Theta\in \ms(A\to B)$ be a completely CSPO preserving superchannel.
By definition, a superchannel can be realized using a pre-processing channel $\mE\in \CPTP(B_0\to E_1A_0)$ and a post-processing channel $\mF\in \CPTP(E_1A_1\to B_1)$  \cite{G19}.
The normalized Choi matrix of the superchannel can be expressed in terms of these pre- and post-processing channels in the following way:
\begin{widetext}
\begin{equation}\label{normalized_Choi_superchannel}
    \dfrac{1}{|A_1B_0|}\J^{\Theta}_{AB} = \id_{A_1B_0} \otimes \left(\id_{A_0}\otimes\mF_{E_1A_1 \to B_1}\right)\circ\left( \id_{A_1}\otimes\mE_{B_0\to A_0E_1}\right) \left(\phi^+_{A_1\tilde{A_1}}\otimes \phi^+_{B_0\tilde{B_0}} \right)\; ,
\end{equation}
\end{widetext}
where $\phi^+_{A_1\tilde{A}_1}$($\phi^+_{B_0\tilde{B}_0}$) represents the maximally entangled state in the system $A_1\tilde{A}_1$($B_0\tilde{B}_0$). 
Eq.\eqref{normalized_Choi_superchannel} can be diagrammatically illustrated using Fig.\ref{Choi_superchannel}.
\begin{figure}[!h]
\centering
   \includegraphics[width=0.48\textwidth]{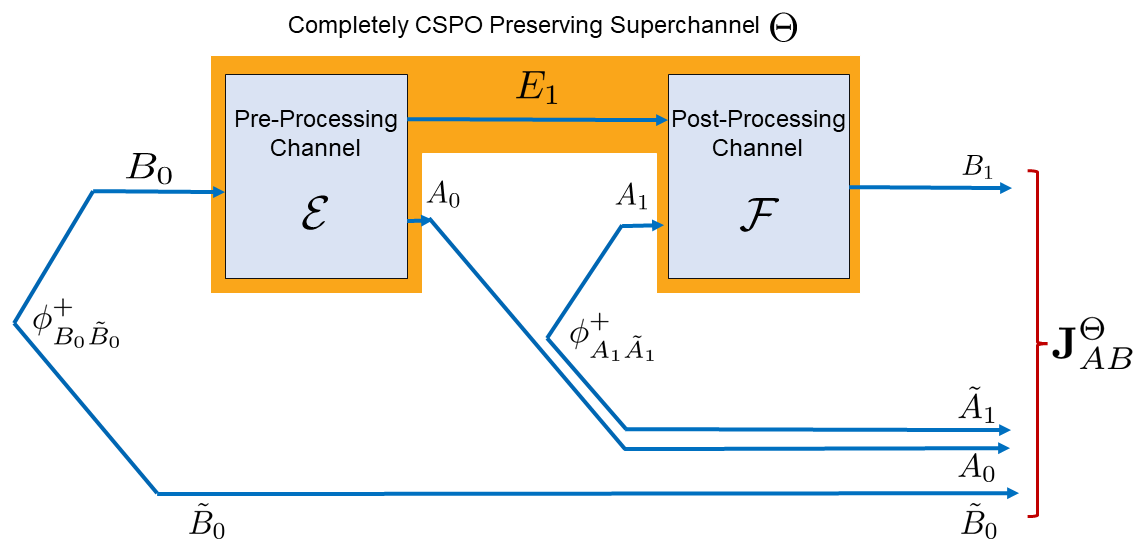}
  \caption{\linespread{1}\selectfont{\small Normalized Choi matrix of a superchannel }} 
  \label{Choi_superchannel}
\end{figure}

Define $\mN\in \CPTP(A_0\to \tilde{A_0}A_1\tilde{A_1})$ such that
\begin{equation}
    \mN(\rho_{A_0}) := \rho_{A_0} \otimes \phi^+_{A_1\tilde{A_1}}\; ,\label{useful_CSPO}
\end{equation}
for any input density matrix in $A_0$.
Note that the normalized Choi matrix of $\mN$ is a stabilizer state. 
Therefore, $\mN$ is a completely stabilizer preserving operation \cite{SC19}.
Using such a channel we can view the Choi matrix of a superchannel as shown in Fig. \ref{Choi_superchannel_as_CSPO}.

\begin{figure}[h]
\centering
   \includegraphics[width=0.48\textwidth]{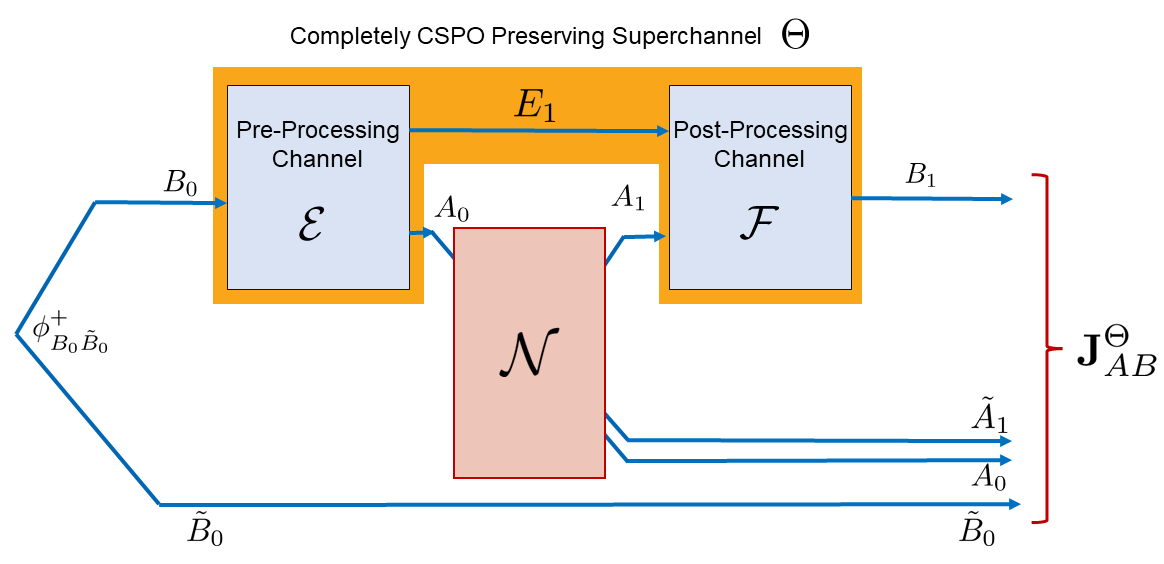}
  \caption{\linespread{1}\selectfont{\small Choi matrix of a completely CSPO preserving superchannel viewed as a CSPO}} 
  \label{Choi_superchannel_as_CSPO}
\end{figure}
Since $\Theta$ is a completely CSPO preserving superchannel, and $\mN$ is a CSPO as defined in Eq.\eqref{useful_CSPO}, the output channel $\Theta[\mN]$ is a CSPO and so, $\Theta[\mN](\phi^+_{{B_0}\tilde{B}_0})$ is a stabilizer state.

Hence, the normalized Choi matrix of a completely CSPO preserving superchannel is a stabilizer state.

For the other side of the proof, let $\Theta \in \ms(A\to B)$ be a superchannel that is not completely CSPO preserving.
Then there exists a CPTP map $\mE\in \CSPO(A_0R_0 \to A_1R_1)$ such that $\Theta_{A\to B}[\mE_{AR}]\notin \CSPO(B_0R_0\to B_1R_1)$.
Therefore, for some stabilizer witness $W_{B\tilde{R}}$, it holds that
\eqs{
    \tr\[W_{B\tilde{R}}\tr_{AR}\[\dfrac{|A_0|}{|B_0|} \J^{\1\otimes \Theta}_{ARB\tilde{R}}\( \dfrac{J^{\mE}_{AR}}{|A_0R_0|}\otimes I_{B\tilde{R}} \)  \]\] < 0\; .}
After some algebraic manipulations, the above inequality reduces to
    \eqs{ \tr\[\(\dfrac{J^{\mE}_{AR}}{|A_0R_0|} \otimes W_{B\tilde{R}} \)\dfrac{|A_0|}{|B_0|}\J^{\1\otimes \Theta}_{ARB\tilde{R}} \] < 0\;.}
Since the normalized Choi matrix of $\mE$ is a stabilizer state, the following inequality 
    \eqs{\tr\[\(|\phi\rangle_{AR}\langle\phi| \otimes W_{B\tilde{R}} \)\dfrac{1}{|A_1B_0R_0R_1|}\J^{\1\otimes \Theta}_{ARB\tilde{R}} \] < 0
}
must hold for some pure stabilizer state $|\phi\rangle_{AR}$. From \cite{SC19} we know that $(|\phi\rangle_{AR}\langle\phi| \otimes W_{B\tilde{R}})$ is a valid stabilizer witness. Hence, 
\eqs{
    \dfrac{1}{|A_1B_0R_0R_1|}\J^{\1\otimes \Theta}_{ARB\tilde{R}} \notin \STAB(ARB\tilde{R})}
which is equivalent to
    \eqs{\dfrac{1}{|A_1B_0R_0R_1|}\J^{\1}_{R\tilde{R}}\otimes \J^{\Theta}_{AB}\notin \STAB(ARB\tilde{R})}
and that implies
    \eqs{\dfrac{1}{|A_1B_0|}\J^{\Theta}_{AB}\notin \STAB(AB)\; .
}
Therefore, we can conclude that the normalized Choi matrix of a superchannel is a stabilizer state if and only if the superchannel is completely CSPO preserving.
\end{proof}

\section{Measures of magic}\label{sec:magic_measures}

In this section, we quantify the magic of quantum states and channels.
We extend the generalized robustness and the min-relative entropy of magic from the state to the channel domain \cite{SRP+20,BB+19}.
These quantifiers arise from the standard resource theoretic techniques and are related to the channel divergences which have been studied recently in detail in \cite{LY20,G19, GW19,WBH+20,FFR+20,GW21,LKD+18, CMW16}. 
Next, we formally define the geometric measure of magic for states which to the best of our knowledge has not been defined earlier.
We couldn't find any operational use of this monotone and leave it as an open problem.
Note that, we will denote the (free) robustness of magic as ${\rm R}$, the generalized robustness of magic as ${\rm R}_g$, the min relative entropy of magic of states as $D_{\min}^{\STAB}$, the hypothesis testing relative entropy of magic of states as $D_{\min}^{\epsilon,\,\STAB}$, and the min relative entropy of magic of channels as $D_{\min}^{\CSPO}$. For completeness, we have briefly discussed robustness of magic and hypothesis testing relative entropy of magic in Appendix~\ref{append:rob_of_magic} and \ref{append:hyptesting_rel_ent_magic}, respectively.

\subsection{Generalized Robustness of magic for channels}
The generalized robustness of magic for states was defined in \cite{SRP+20}.
Below we generalize it for the channel case and define the log of generalized robustness of magic for a channel $\mN_A \in \CPTP(A_0\to A_1)$ as
\eqs{
    \LR_g(\mN_A) &= \min_{\mE \in \CSPO(A_0\to A_1)} D_{\rm max} (\mN_A\|\mE_A)\label{eq:log_rob_Dmax}\\
    &= \log \min \{ \lambda : \lambda\,\mE \geq \mN;\; \mE \in \CSPO(A_0\to A_1)\} \label{eq:theta_max_defn} }
This optimization problem can be expressed in terms of Choi matrices as
\eqs{
    \LR_g(\mN_A) &=\log \min \lambda\\
    &\qquad\quad\: {\rm s.t.:}\; \lambda\, J^{\mE}_A \geq J^{\mN}_A\; ,\nonumber\\
    &\qquad\quad\qquad\; J^{\mE}_{A_0} = I_{A_0}\; ,\nonumber\\
    &\qquad\quad\qquad\; \frac{J^{\mE}_{A}}{|A_0|}\in \STAB(A_0A_1)\nonumber}
which can be simplified as
    \eqs{
     \LR_g(\mN_A) &=\log \min \frac{\tr[\omega_A]}{|A_0|}\\
    &\qquad\quad\: {\rm s.t.:}\; \omega_A \geq J^{\mN}_A\; ,\nonumber\\
    &\qquad\quad\qquad\; \omega_{A_0} = \tr[\omega_A]\frac{I_{A_0}}{|A_0|}\; ,\nonumber\\
    &\qquad\quad\qquad\; \frac{\omega_A}{\tr[\omega]}\in \STAB(A_0 A_1)\, .\nonumber
}
The dual of the above primal problem can be written as
\begin{widetext}
\ba
    \LR_g(\mN_A) &= \log \sup \tr[\alpha_A J^{\mN}_A ]\\
    &\quad\quad\quad {\rm s.t.:}\; \tr\left[\phi_i\left(\alpha_A+\beta_{A_0}\otimes I_{A_1}-\tr[\beta_{A_0}]\frac{I_A}{|A_0|}\right)\right]\leq \frac{1}{|A_0|} \;\;\forall\; \phi_i \in \STAB(A_0A_1) ,\\
    &\qquad\quad\qquad \; \alpha_A\geq 0,\, \beta_{A_0}\in \herm(A_0)\, .
\ea
\end{widetext}
Some properties of the generalized robustness of magic for channels are listed below.
\begin{enumerate}
\item \textit{Faithfulness.} $\LR_g(\mN_A) = 0  \iff \mN\in \CSPO(A_0\to A_1)$. The proof is similar to the state case.
\item \textit{Monotonicity.} $\LR_g(\Theta[\mN])\leq \LR_g(\mN)$ for any free superchannel $\Theta \in \mf_1(A\to B)$ or $\Theta \in \mf_2(A\to B)$. The proof follows from the data-processing inequality as
\ba
	\LR_g(\Theta[\mN]) &= \min_{\mF\in \CSPO(B)}D_{\max}(\Theta[\mN]\| \mF)\\
	&\leq \min_{\mE\in \CSPO(A)}D_{\max}(\Theta[\mN]\| \Theta[\mE])\\
	&\leq \min_{\mE\in \CSPO(A)}D_{\max}(\mN\| \mE)\,.
\ea
\item \textit{Sub-additivity.}  $\LR_g(\mN\otimes \mM) \leq \LR_g(\mN) + \LR_g(\mM)$. The proof easily follows from equation~\eqref{eq:log_rob_Dmax}. 
\end{enumerate}
\begin{remark}\label{remark:channel_decomp}
Eq.\eqref{eq:theta_max_defn} can be rewritten (without the {\rm log}) as 
\eqs{
{\rm R}_g(\mN_A)= \min\Big\{\lambda&\geq 1:\nonumber\\
    &\frac{\mN + (\lambda-1)\mM}{\lambda}\in \CSPO(A_0\to A_1)\,,\nonumber\\ 
    &\mM\in \CPTP(A_0\to A_1) \Big\}\, .
   }
Hence, for any $\lambda \geq {\rm R}_g(\mN_A)$, a channel $\mN_A$ can then be expressed as
\begin{equation}
    \mN_A = \lambda \mE - (\lambda - 1)\mM
\end{equation}
for some $\mE\in \CSPO(A_0\to A_1)$ and some $\mM\in \CPTP(A_0\to A_1)$.
\end{remark}
\subsection{Min-relative entropy of magic for states and channels}
Below, we present another monotone, the min relative entropy of magic.
The min-relative entropy of magic of a state $\rho$ is defined as
\begin{align}
    D_{\min}^{\STAB}(\rho) \coloneqq& \min_{\sigma\in\STAB} D_{\min}(\rho\| \sigma) \\
    =& \min_{\sigma\in\STAB}\left( -\log_2 \tr[P_{\rho} \sigma] \right)\\
    =& \,-\log_2 \max\, \tr[P_{\rho} \sigma]\\
    & \qquad\qquad\quad {\rm s.t.:}\, \sigma\in\STAB\, ,\nonumber\\
    =& \,-\log_2 \max\, \tr[P_{\rho} \phi]\\
    & \qquad\qquad\quad {\rm s.t.:}\, \phi\in \STAB\,\nonumber .
\end{align}
where $P_{\rho}$ denotes the projection onto the support of $\rho$.
Similarly, the min-relative entropy of a channel $\mN$ can be defined as
\eqs{
\Hmin(\mN_A) \coloneqq&\min_{\mE\in \CSPO(A)}D_{\min}\(\mN\|\mE\) \\
=& \min_{\mE\in \CSPO(A)}\sup_{\psi\in\md(R_0A_0)}D_{\min} \(\mN(\psi)\|\mE \(\psi\)\)\, .
}
Below we list some of the properties of the min-relative entropy of magic.
\begin{enumerate}
\item \textit{Faithfulness.} The min-relative entropy of magic is faithful for both the states and channels, i.e.,
\eqs{ 
\Hmin(\mN_A) &= 0  \iff \mN\in \CSPO(A_0\to A_1)\\
D_{\min}^{\STAB}(\rho_{A_0}) &= 0  \iff \rho\in \STAB(A_0)\, .
}
\item \textit{Monotonicity.} The min-relative entropy of magic is a monotone under CSPOs for the state case and under CSPO preserving superchannels and completely CSPO preserving superchannels for the channel case.
Thus, for any state $\rho\in \md(A_0)$  it follows that $D_{\min}^{\STAB}(\mE(\rho)) \leq D_{\min}^{\STAB}(\rho)$ for any $\mE \in \CSPO$, and for any channel $\mN\in \CPTP(A_0\to A_1)$, it follows that $\Hmin(\Theta[\mN]) \leq \Hmin(\mN)$ for any $\Theta \in \mf_1(A\to B)$ or $\Theta \in \mf_2(A\to B)$. 
The proof for the state case is given below which follows from the data-processing inequality as
\ba
	D_{\min}^{\STAB}(\mE(\rho)) &= \min_{\sigma\in \STAB}D_{\min}(\mE(\rho)\| \sigma)\\
	&\leq \min_{\sigma\in \STAB}D_{\min}(\mE(\rho)\| \mE(\sigma))\\
	&\leq \min_{\sigma\in \STAB}D_{\min}(\rho\| \sigma)\,.
\ea
Proof for the channel case follows similarly.
\item \textit{Sub-additivity.} Sub-additivity holds for both static and dynamic min-relative entropies of magic, i.e., $D_{\min}^{\STAB}(\rho_1 \otimes \rho_2) \leq D_{\min}^{\STAB}(\rho_1) + D_{\min}^{\STAB}(\rho_2)$ for any two density matrices $\rho_1$ and $\rho_2$, and $\Hmin(\mN\otimes \mM) \leq \Hmin(\mN) + \Hmin(\mM)$ for any two quantum channels $\mN$ and $\mM$. Moreover, for single qubit states, the min-relative entropy of magic is additive, i.e.,
$D_{\min}^{\STAB}(\rho_1 \otimes \rho_2) = D_{\min}^{\STAB}(\rho_1) + D_{\min}^{\STAB}(\rho_2)$ \cite{BB+19}. The proof of this is provided in Appendix \ref{append:min_relent_additivity}.
\end{enumerate}
\subsection{Geometric measure of magic for states}
In this subsection, we formally define the geometric measure of magic which to the best of our knowledge has not been defined before.
Inspired from the geometric measure of entanglement \cite{WG03}, we define the geometric measure of magic for pure states as
\begin{align}
    g(\psi) &= 1 - \max_{\phi\in \STAB}\tr[\psi\phi] 
\end{align}
For general mixed states, we can extend the above measure using fidelity as
\begin{equation}
   g(\rho) = 1 - \max_{\sigma \in \STAB}F^2(\rho,\sigma)
\end{equation}
where $F(\rho, \sigma):=\tr\left[\sqrt{\sqrt{\sigma}\rho\sqrt{\sigma}}\right]$ is the fidelity between two states $\rho$ and $\sigma$.
Below we list the properties of this measure:
\begin{enumerate}
    \item \textit{Faithfulness:} $g(\rho)=0$ if and only if $\rho\in \STAB$.
    \item \textit{Monotonicity:} $g(\mE(\rho))\leq g(\rho)\; \forall \; \mE \in \CSPO$. The proof is similar to the proof of monotonicity of geometric measures in \cite{CG19}.
    
    
    \item \textit{Subadditivity:} $g(\rho_1 \otimes \rho_2) \leq g(\rho_1) + g(\rho_2)$. 
    This follows easily if we let $\sigma_1$ and $\sigma_2$ be the respective optimal stabilizer states such that $g(\rho_1) = 1 - F^2(\rho_1,\sigma_1)$ and $g(\rho_2) = 1 - F^2(\rho_2,\sigma_2)$. Then
    \begin{align}
    	\max_{\sigma\in \STAB} F(\rho_1\otimes \rho_2, \sigma)&= \max_{\sigma}\tr\[\sqrt{\sqrt{\sigma}(\rho_1\otimes \rho_2)\sqrt{\sigma}}\]\\
    	&\geq \tr\[\sqrt{\(\sqrt{\sigma_1} \rho_1 \sqrt{\sigma_1} \)\otimes \(\sqrt{\sigma_2} \rho_2 \sqrt{\sigma_2} \) }\]\\
    	&=F(\rho_1,\sigma_1)F(\rho_2,\sigma_2)
	\end{align}
	where the inequality follows by choosing $\sigma = \sigma_1\otimes \sigma_2$.     
\end{enumerate}

\section{Interconversions}\label{sec:interconversions}
Resource interconversion is one of the central themes of resource theory.
In this section, we discuss the conditions for qubit interconversions under CSPOs in \ref{subsec:qubit_interconv}, 
and the conversion of magic states to channels and vice-versa under CSPO preserving and completely CSPO preserving superchannels in \ref{subsec:cost_and_dist}. 
We also formulated the interconversion distance which is given in Appendix~\ref{subsec:interconv_dist}.

\subsection{Qubit interconversion under CSPOs}\label{subsec:qubit_interconv}

For the resource theory of magic, any pure magic state can be used as a resource to achieve universal quantum computation \cite{BK05}.
The procedure involves distilling a pure magic state from a given magic state and then using few copies of this pure magic state to perform any quantum computation. 
Experimentally, it of interest to distil single qubit magic states, and the common choices are that of the $|T\rangle$ state or the $|H\rangle$ state where:
\eqs{
|T\rangle\langle T| &= \frac{1}{2}\left( I + \left( X + Y\right)/\sqrt{2} \right)\; ,\\
|H\rangle\langle H| &= \frac{1}{2}\left( I + \left( X + Y + Z\right)/\sqrt{3} \right)\; .
}

Here, we are interested in a more general problem of finding whether a given single qubit magic state can be converted to another by repeated application of CSPOs.
Equivalently, we want to find out which set of states on the Bloch sphere can be reached by restricting ourselves to the application of CSPOs on a single qubit magic state.
For multiqubit systems, this problem is an NP-hard problem because the number of stabilizer states increases super-exponentially as we increase the dimension. 
For the qubit case, we use geometry to our advantage and provide the following theorem for the conversion of a state $\rho$ into a state $\sigma$. 
We show that this interconversion problem can be cast as a linear programming feasibility problem.
For the purpose of this theorem, let us define $C(\rho) := \{U\rho U^{\dagger}: U\in {\rm Clifford} \}$ as the set of Clifford equivalent states of $\rho$.
We show in the proof of the theorem below that for a single qubit state $\rho$, the set  $C(\rho)$ contains 24 elements unless the state has additional symmetry, in which case the number of elements are less than 24.
For instance, $C(|0\rangle\langle 0 |)$ contains only 6 elements which are all the pure single-qubit stabilizer states.
\begin{theorem}\label{thm:qubit_interconversion}
Let $A$ be a $(3 \times 31)$ matrix with first $24$ columns being the Bloch vectors of the elements of $C(\rho)$, the next $6$ columns being the Bloch vectors of the pure qubit stabilizer states, and the last column being $(1,1,1)^{\rm T}$.
Let $\mathbf{b}$ be the $(3\times 1)$ Bloch vector corresponding to the state $\sigma$.
Then, the state $\rho$ can be converted to the state $\sigma$ using CSPOs if there exists an $\mathbf{x}\in \mathbb{R}^{31}_+$ such that $A\mathbf{x}=\mathbf{b}$.
\end{theorem}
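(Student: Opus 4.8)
The plan is to identify the set of states reachable from $\rho$ under CSPOs with a single polytope and then recognize membership in that polytope as a linear feasibility problem. Concretely, I would first argue that
\[
\{\mE(\rho):\mE\in\CSPO(A_0\to A_1)\}={\rm conv}\big(C(\rho)\cup\STAB(A_1)\big),
\]
the convex hull of the Clifford orbit of $\rho$ together with the single-qubit stabilizer octahedron. The inclusion $\supseteq$ is exactly the content needed for the stated ``if'' direction and is the easy half: each Clifford conjugation $\rho\mapsto U\rho U^{\dagger}$ is a CSPO (Cliffords preserve $\STAB$), each constant channel $\rho\mapsto\phi$ with $\phi\in\STAB(A_1)$ is a CSPO since its normalized Choi $\tfrac{1}{2}I_{A_0}\otimes\phi$ is a stabilizer state by Eq.~\eqref{CSPO_Choi_defn}, and $\CSPO$ is convex, so any convex mixture of such channels is again a CSPO. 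Hence every point of ${\rm conv}(C(\rho)\cup\STAB)$ is realized as $\mE(\rho)$ for an explicit CSPO $\mE$.

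Next I would record the single-qubit geometry that turns this into the LP of the theorem. The single-qubit Clifford group acts on the Bloch ball as the $24$ rotational symmetries of the octahedron, so $C(\rho)$ consists of (generically) $24$ Bloch vectors, which are the first $24$ columns of $A$; extra symmetry of $\rho$ only merges some of these, matching the stated collapse of $C(|0\rangle\langle 0|)$ to the $6$ stabilizer points. The stabilizer polytope $\STAB(A_1)$ is precisely the octahedron whose vertices are the $6$ pure stabilizer Bloch vectors $\pm\mathbf{e}_x,\pm\mathbf{e}_y,\pm\mathbf{e}_z$, which form the next $6$ columns of $A$. Membership $\mathbf{b}\in{\rm conv}(C(\rho)\cup\STAB)$ is then the existence of nonnegative weights on these $30$ vertices whose Bloch-weighted average equals $\mathbf{b}$, and the check that $A\mathbf{x}=\mathbf{b}$ over $\mathbf{x}\in\mathbb{R}^{31}_+$ encodes exactly this condition (with the remaining column and the nonnegativity of $\mathbf{x}$ carrying the affine normalization of the convex weights) is routine. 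This manifestly exhibits the interconversion test as a linear feasibility problem, which is efficiently decidable, and a feasible $\mathbf{x}$ yields the convex combination that, by the previous paragraph, is an explicit CSPO sending $\rho$ to $\sigma$.

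To upgrade sufficiency to a full characterization, namely the converse inclusion $\subseteq$, which I expect to be the crux, I would show that no CSPO can push $\rho$ outside ${\rm conv}(C(\rho)\cup\STAB)$. The cleanest route is a separating-hyperplane argument: if $\sigma=\mE(\rho)$ lay outside the hull, there would be a Hermitian $H$ with $\tr[H\sigma]>\tr[H\tau]$ for every $\tau\in C(\rho)\cup\STAB$, and feeding the resulting bounds on $\tr[H\,U\rho U^{\dagger}]$ (all Cliffords $U$) and $\tr[H\phi]$ (all $\phi\in\STAB$) into the Choi characterization~\eqref{CSPO_Choi_defn} should contradict $J^{\mE}_A/|A_0|\in\STAB(A)$. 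Equivalently, one decomposes the stabilizer-state Choi $J^{\mE}_A/|A_0|$ into pure two-qubit stabilizer states and classifies them by their $A_0$-marginal: the maximally entangled terms act as Clifford unitaries and give outputs in $C(\rho)$, while the remaining terms steer the output toward $\STAB$. The delicate point is the trace-preserving constraint, since the individual pure stabilizer terms need not be trace preserving even though their average is; handling this normalization carefully is where the real work lies, whereas the forward construction and the LP reduction are essentially bookkeeping.
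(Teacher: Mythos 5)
Your argument for the stated implication is correct, but it runs in the opposite direction to the paper's proof. The paper starts from an arbitrary CSPO and uses Eq.~\eqref{CSPO_Choi_defn}: the normalized Choi matrix is a two-qubit stabilizer state, which it decomposes into pure stabilizer states that are either maximally entangled (exactly 24 of them, corresponding to the single-qubit Cliffords) or products, giving $\mE(\rho)=\sum_i p_i\,\mU_i(\rho)+\sum_k q_k\,\phi_k$ --- that is precisely your route (b), the converse inclusion you call the crux and leave unfinished. The ``if'' direction actually asserted by the theorem is then only implicit in the paper (via convexity of the CSPO set), whereas you prove it head-on and constructively: Clifford conjugations and stabilizer-replacement channels are CSPOs, the set is convex, so a feasible $\mathbf{x}$ yields an explicit CSPO achieving the conversion. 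For the literal statement your proof is the more direct and self-contained one; what the paper's Choi-decomposition route buys is the converse inclusion as well, which is what turns the LP into a genuine characterization and underwrites the Farkas-dual no-go certificate in the remark following the theorem. Note also that the difficulty you anticipate in route (b) --- that individual pure-stabilizer terms in the Choi decomposition are not trace preserving --- dissolves in the paper's computation: one only evaluates the channel on the fixed input $\rho$, and unit trace of $\mE(\rho)$ automatically forces $\sum_i p_i+\sum_k q_k=1$ with $q_k=|A_0|\sum_j p_{j,k}\tr[\phi_j\rho]$, so no per-term trace preservation is ever needed.

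One shared slip is worth flagging: your parenthetical claim that the last column and the nonnegativity of $\mathbf{x}$ ``carry the affine normalization'' does not work as stated. Appending a $(1,1,1)^{\rm T}$ column to $A$ adds a variable; it does not impose $\sum_i x_i=1$, and without that constraint the system $A\mathbf{x}=\mathbf{b}$, $\mathbf{x}\geq 0$ is feasible for \emph{every} $\mathbf{b}$, since the six octahedron vertices already positively span $\mathbb{R}^3$ --- which would make the sufficiency claim vacuously false. The correct encoding appends a row of ones to $A$ and a $1$ to $\mathbf{b}$. The paper's own proof makes the identical slip, so this is an inherited bookkeeping error rather than a gap peculiar to your argument, but it needs fixing in both.
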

\begin{remark}
The problem of finding $\mathbf{x}$ such that $A\mathbf{x} = \mathbf{b}$ and $\mathbf{x}\geq 0$ is known as an SDP feasibility problem and can be solved using standard techniques in convex analysis~\cite{cvx, GB08}.
It also has a dual given by the Farkas lemma.
Using the dual of the above feasibility problem, we can say that the state $\rho$ cannot be converted to $\sigma$ if there exists a $\mathbf{y}\in \mathbb{R}^3$ such that $A^{\rm T}\mathbf{y}\geq 0 $ and $\mathbf{b}\cdot\mathbf{y}< 0$.
\end{remark}

\begin{proof}

From \cite{SC19} and Eq.~\eqref{CSPO_Choi_defn}, we know that the normalized Choi matrix of any CSPO is a stabilizer state. 
Let $\mE_{A_0\to A_1}\in \CSPO(A)$ such that both $A_0$ and $A_1$ are single qubit systems.
If we denote a pure two qubit maximally entangled stabilizer state as $\psi^{ent}$ and a single qubit stabilizer state as $\phi$, we can 
write the action of $\mE_A$ on any input $\rho\in\md(A_0)$ as
\begin{widetext}
\eqs{
\mE(\rho_{A_0}) &= \tr_{A_0}\[J^{\mE}_{A}\(\rho_{A_0} \otimes I_{A_1}\)\]\\
			&= |A_0|\( \sum_i p_i \tr_{A_0}\[\psi_i^{ent}\(\rho_{A_0} \otimes 
					I_{A_1}\)\] + \sum_{j,k}p_{j,k}\tr_{A_0}\[\(\(\phi_j\)_{A_0}\otimes 
					\(\phi_k\)_{A_1}\) \(\rho_{A_0} \otimes I_{A_1} \) \]\)\\
			&= \sum_i p_i \mU_i(\rho_{A_0}) + |A_0| \sum_{j,k}p_{j,k}\tr\[\(\phi_j\)_{A_0} \rho_{A_0}\] \(\phi_k\)_{A_1}\\
			&=  \sum_i p_i \mU_i(\rho_{A_0}) + \sum_k q_k \phi_k\, ,\label{eq:convex_comb}
}
\end{widetext}
where $q_k = |A_0|\sum_j p_{j,k}\tr\left[ \phi_j \rho \right]$.
In the above, the second equality follows because any two-qubit stabilizer state can be expressed as a convex combination of pure two-qubit entangled and pure two-qubit separable stabilizer states.
From the above equations, we see that the action of a (qubit input and output) CSPO on a qubit can be represented as a convex combination of the action of completely stabilizer preserving unitary operations and stabilizer replacement channels. (An alternative proof can also be found in~\cite{HHG20}).
Note that for two-qubit states, there are a total of 60 pure stabilizer states of which only 24 are entangled \cite{G06}.
Hence there are only 24 single-qubit unitary gates that are completely stabilizer preserving. 
These unitary gates are listed in Appendix \ref{append:unitary_cspo} and are Clifford unitaries. 
Therefore, any state can be transformed to at the most 24 states (including itself) on the Bloch sphere using these unitary gates.
For a single qubit state, which can be expressed as a vector $(r_1, r_2, r_3)^T$ in the Bloch sphere, its transformations using these unitary gates are given in Appendix \ref{append:unitary_cspo}.
Furthermore, if we view the Bloch sphere as been divided into 8 octants according to $(\pm X,\pm Y, \pm Z)$ and each octant to be further subdivided into three subsets such that for one subset it holds that $|\langle X \rangle|\leq |\langle Y\rangle|,|\langle Z\rangle|$, for second subset it holds that  $|\langle Y \rangle|\leq |\langle X\rangle|,|\langle Z\rangle|$, and for the third subset we have  $|\langle Z \rangle|\leq |\langle X\rangle|,|\langle Y\rangle|$, then using table~\ref{Bloch_vector_transformations} in Appendix~\ref{append:unitary_cspo}, we can say that any arbitrary state in some subset (of an octant) is Clifford equivalent to a state in any other subset.
Therefore, we can conclude from the equations and the arguments above that the set of states that can be generated from a given state under the action of CSPOs must belong to a convex polytope in the Bloch sphere, the extreme points of which are the Clifford-equivalent states of the given state and the stabilizer states.
Further, if we let $\{\mathbf{r}_i\}$ denote the set of Bloch vectors of the Clifford equivalent state of $\rho$, $\{\mathbf{s}_k\}$ denote the Bloch vectors of the pure single qubit stabilizer states, and $\mathbf{b}$ as the Bloch vector of $\mE(\rho) $, then from Eq~\eqref{eq:convex_comb}, we can write the Bloch vector $\textbf{b}$ as
\eqs{
\textbf{b} = \sum_i p_i \textbf{r}_i + \sum_k q_k \textbf{s}_k\, . 
}
We can now express the above in the form of the equation $A\mathbf{x} = \mathbf{b}$, where the matrix $A$ is a $(3 \times 30)$ matrix consisting of $\mathbf{r}_i$'s and $\mathbf{s}_k$'s as column vectors, and $\mathbf{x}$ is the $(30\times 1)$ vector consisting of non-negative numbers summing to one.
We can include this last condition on $\mathbf{x}$ by inserting a $(1,1,\dots,1)$ row in $A$ thus making $A$, a $(4\times 30)$ matrix.
Therefore, we can now say that a state $\rho$ can be converted to a state $\sigma$ with Bloch vector $\mathbf{b}$ if there exists a vector $\mathbf{x}\in \mathbb{R}^{30}$ such that
\eqs{
A\mathbf{x} &= \mathbf{b},\;{\rm and}\\
\mathbf{x}&\geq 0\, .
}

\begin{remark}
The above interconversion conditions can be expressed and visualized on a Bloch sphere which has been discussed in Appendix~\ref{append:geometrical_interpretation}.
\end{remark}
\end{proof}

\subsection{Cost and Distillation bounds under CSPO preserving and completely CSPO preserving superchannels}\label{subsec:cost_and_dist}

In this subsection, we find bounds on the cost of converting a magic state to a multi-qubit magic channel and the bounds on distilling magic from a quantum channel using both CSPO preserving and completely CSPO preserving superchannels.
For the case of distillation, we focus on distilling pure single qubit magic states because a pure magic state is enough for achieving universality in the magic state model of quantum computation. 
Besides, due to the complexity involved in verifying whether a state is a stabilizer state, we leave the problem of finding the upper bound of cost and lower bound of distillation using completely CSPO preserving superchannels as open.

Since any pure magic state can be used as a resource to perform universal quantum computation, we define the dynamical magic cost of converting a pure magic state to a channel $\mN\in \CPTP(B_0\to B_1)$ under CSPO preserving superchannels  or completely CSPO preserving superchannels as
\begin{widetext}
\begin{equation}
\cost_{\mf_{1(2)}}(\mN_B) = \min\log\{ |A_1|: \Theta[\psi_{A_1}] = \mN_B,\, \psi\in \md(A_1),\, \Theta\in \mf_{1(2)}(A_1\to B)\}\, .
\end{equation}
\end{widetext}
If we want the cost of simulating a channel in terms of a particular magic state $\psi\in \md(A_1)$, we define cost as
\eqs{
\cost_{\psi,\, \mf_{1(2)}}(\mN_B) = \min\big\{n:\Theta[\psi^n]& = \mN_B,\nonumber\\ 
&\Theta\in \mf_{1(2)}(A_1\to B)\big\}\, .
}
Distillation of a pure single qubit magic state $\psi$ from a channel $\mN\in \CPTP(A_0\to A_1)$ using CSPO preserving or completely CSPO preserving superchannels is defined as
\begin{widetext}
\begin{equation}
    \distill^{\epsilon}_{\psi,\mf_{1(2)}}(\mN_A) = \max\{ n: F\left(\Theta[\mN], \psi^n\right) \geq 1 - \epsilon, \, \Theta\in \mf_{1(2)}(A\to B_1)\}\, .
\end{equation}
\end{widetext}
\begin{proposition}
$\cost_{\mf_1}(\mN) \leq \log(|A_1|)$ if for some system $A_1$, we have
\eqs{\max_{\psi\in \md(A_1)}D_{\min}^{\STAB}(\psi_{A_1})\geq \LR(\mN_B)\, }
where $\LR(\mN_B)$ is the log of the robustness of $\mN_B$.
If $\psi$ is a given single qubit magic state, then  it follows that
\begin{equation}
\cost_{\psi,\mf_1}(\mN)\leq\left\lceil \frac{\LR(\mN)}{D_{\min}^{\STAB}(\psi)} \right\rceil\, . \label{state_dependent_cost}
\end{equation}
\end{proposition}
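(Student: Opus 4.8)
The plan is to reduce both inequalities to a single convertibility claim and then build an explicit free superchannel realizing it. The claim is: if a state $\xi$ on $A_1$ satisfies $D_{\min}^{\STAB}(\xi_{A_1})\geq \LR(\mN_B)$, then there exists $\Theta\in\mf_1(A_1\to B)$ with $\Theta[\xi]=\mN_B$; granting this, $(A_1,\xi,\Theta)$ is feasible in the definition of $\cost_{\mf_1}$, so $\cost_{\mf_1}(\mN_B)\leq\log|A_1|$. For the first statement I would take $\xi=\psi^{*}$ to be a maximizer of $\max_{\psi\in\md(A_1)}D_{\min}^{\STAB}(\psi_{A_1})$, so that the hypothesis gives $D_{\min}^{\STAB}(\psi^{*})\geq\LR(\mN_B)$ directly. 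For the state-dependent bound I would set $n=\lceil \LR(\mN)/D_{\min}^{\STAB}(\psi)\rceil$ and take $\xi=\psi^{\otimes n}$ on $n$ qubits; since $\psi$ is a single qubit, the additivity of $D_{\min}^{\STAB}$ recorded in Section~\ref{sec:magic_measures} gives $D_{\min}^{\STAB}(\psi^{\otimes n})=n\,D_{\min}^{\STAB}(\psi)\geq\LR(\mN)$, so the convertibility claim produces a free superchannel from $n$ copies and hence $\cost_{\psi,\mf_1}(\mN)\leq n$.

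For the convertibility claim I would first use the definition of the robustness ${\rm R}$ to write $\mN_B=\lambda_+\mE_+-\lambda_-\mE_-$ with $\mE_\pm\in\CSPO(B)$, $\lambda_+-\lambda_-=1$ and $\lambda_++\lambda_-={\rm R}(\mN_B)=2^{\LR(\mN_B)}$. I would then realize $\Theta$ as a state-injection gadget: a superchannel with a state input acts as $\Theta[\xi](\rho_{B_0})=\mathcal{F}(\xi_{A_1}\otimes\rho_{B_0})$ for a channel $\mathcal{F}:A_1B_0\to B_1$, and $\Theta$ is CSPO preserving whenever $\mathcal{F}$ is itself a CSPO, since feeding a stabilizer resource into a CSPO leaves a CSPO. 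I would take $\mathcal{F}$ to be the composition of (i) a CSPO $\mathcal{C}:A_1\to B_0B_1$ that maps $\xi$ onto the normalized Choi state $\tJ^{\mN}_B$, followed by (ii) the deterministic teleportation gadget that reconstructs $\mN_B$ from its Choi state via a Bell measurement on $B_0$ and Pauli corrections. Teleportation is a stabilizer operation, hence automatically a CSPO, so the only nontrivial ingredient is the free state conversion $\xi\mapsto\tJ^{\mN}_B$. Here I would use that the Choi state is cheaper than the channel, $\LR(\tJ^{\mN}_B)\leq\LR(\mN_B)$ (apply the Choi map to the robustness decomposition of $\mN_B$), so the hypothesis upgrades to $D_{\min}^{\STAB}(\xi)\geq\LR(\tJ^{\mN}_B)$, which is exactly the one-shot dilution condition one wants in order to certify a CSPO taking $\xi$ to $\tJ^{\mN}_B$.

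The main obstacle is establishing this last one-shot convertibility and exhibiting the associated CSPO explicitly. The structural difficulty is that the robustness decomposition of a genuine magic channel necessarily carries a negative weight $-\lambda_-$, so no measure-and-reprepare strategy driven by a stabilizer POVM can reproduce $\mN_B$; the injection must be coherent and must consume the non-stabilizerness of $\xi$. The content of the proof is therefore to show that the single scalar inequality $\max_{\phi\in\STAB}\tr[P_\xi\phi]\leq 2^{-\LR(\mN_B)}$ is precisely what forces the normalized Choi matrix of $\mathcal{C}$ (and hence of $\mathcal{F}$) to be a stabilizer state. I expect this to be the step requiring the most care, and I would attack it by writing the feasibility of $\mathcal{C}$ as a conic stabilizer-witness problem in the Choi picture and checking that the budget inequality furnishes a feasible point; the faithfulness and monotonicity of $D_{\min}^{\STAB}$ and $\LR$ established earlier then confirm that the construction cannot beat what these measures permit.
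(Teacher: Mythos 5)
Your high-level reduction (choose $\xi$ with $D_{\min}^{\STAB}(\xi)\geq \LR(\mN_B)$, exhibit a free superchannel with $\Theta[\xi]=\mN_B$, then use additivity of $D_{\min}^{\STAB}$ for single-qubit states to get the ceiling bound) is the same as the paper's, but your construction of $\Theta$ breaks at two points. First, the ``deterministic teleportation gadget that reconstructs $\mN_B$ from its Choi state'' does not exist for a general channel: after the Bell measurement on $B_0$ one holds (up to normalization) $\mN_B(P_i\,\cdot\,P_i)$ for a uniformly random Pauli $P_i$, and the correction can be undone only when the Pauli can be commuted through $\mN_B$, i.e.\ for Pauli/Clifford-covariant channels. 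This is exactly why state injection works for gates in the Clifford hierarchy but not for arbitrary channels, and the proposition is claimed for arbitrary $\mN_B$. Second, your scheme needs $\mF$ (hence $\mC$) to be a genuine CSPO, i.e.\ \emph{completely} stabilizer preserving. But a superchannel of the form $\Theta[\xi]=\mF(\xi\otimes\cdot)$ with $\mF\in\CSPO$ has a stabilizer normalized Choi matrix, so by the paper's Theorem~1 it lies in $\mf_2$; you would therefore be proving the cost bound under completely CSPO preserving superchannels, which the paper explicitly leaves as an open problem. The one-shot dilution step you defer (that the budget inequality forces a CSPO $\mC$ with $\mC(\xi)=\tJ^{\mN}_B$) is not a technicality but the hard open part: the natural map realizing this conversion, $\mC(\rho)=\tr[P_\xi\rho]\,\tJ^{\mN}_B+(1-\tr[P_\xi\rho])\,\sigma_-$, is stabilizer preserving under your hypothesis, but it is not completely stabilizer preserving in general --- applied to half of a maximally entangled state it produces conditional reference states proportional to $P_\xi^{T}$, which is not a stabilizer state.

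The paper avoids both obstacles by performing the measure-and-prepare at the superchannel level rather than inside a CSPO: it defines $\Theta[\rho]:=\tr[\psi\rho]\,\mN+(1-\tr[\psi\rho])\,\mM$, where $\mM$ is the optimal CSPO from the channel robustness ${\rm R}(\mN)$. Then $\Theta[\psi]=\mN$, and for any $\sigma\in\STAB(A_1)$ the hypothesis gives $\tr[\psi\sigma]\leq 2^{-\LR(\mN)}=1/(1+{\rm R}(\mN))$, so $\Theta[\sigma]=t\,\mN+(1-t)\,\mM$ with $t(1+{\rm R}(\mN))\leq 1$ is a convex combination of the CSPO $(\mN+{\rm R}(\mN)\mM)/(1+{\rm R}(\mN))$ and the CSPO $\mM$, hence itself a CSPO. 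Membership in $\mf_1$ requires nothing more: the non-stabilizer ``measurement'' $\{\psi,\,I-\psi\}$ is harmless for CSPO \emph{preservation} (as opposed to complete preservation), and this latitude is precisely what your injection-based construction gives up.
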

\begin{proof}
Let for some $\psi\in \md(A_1)$, the following is satisfied
\eqs{D_{\min}^{\STAB}(\psi_{A_1})\geq \LR(\mN_B)\, .\label{eq:cost_requirement}}
Now consider the following superchannel $\Theta\in \ms(A_1 \to B)$ whose action on any input state $\rho\in \md(A_1)$ is given as
\begin{equation}
    \Theta[\rho] \coloneqq \tr[\psi\rho]\mN + (1 - \tr[\psi\rho])\mM\, ,
\end{equation}
 where $\mM$ is the optimal CSPO chosen from the definition of the channel robustness, ${\rm R}(\mN)$.
It is easy to verify that $\Theta[\psi] = \mN$.
From Eq.~\eqref{eq:cost_requirement}, we also get that
\eqs{
&-\log\tr[\psi\sigma]\geq \log(1+{\rm R}(\mN)) \quad \forall \sigma \in \STAB(A_1)\, .
}
Hence, for any $\sigma\in \STAB(A_1)$, it holds that $\tr[\psi\sigma]\leq \frac{1}{1+ {\rm R}(\mN)}$, implying that $\Theta\in \mf_1(A_1\to B)$.
Thus, the cost of converting a pure magic state to a magic channel $\mN_B$ using CSPO preserving superchannels is no greater than $\log(|A_1|)$ if $ \max_{\psi\in \md(A_1)}D_{\min}^{\STAB}(\psi_{A_1}) \geq \LR(\mN)\, .$
  
Further, if $\psi$ is a given single qubit pure magic state, then using the additivity of min-relative entropy of magic for qubits, we get 
\begin{equation}
    \cost_{\psi,\mf_1}(\mN_B) \leq \left\lceil \frac{\LR(\mN)}{D_{\min}^{\STAB}(\psi)} \right\rceil\, .
\end{equation}

\end{proof}
\begin{remark}
We numerically verify that the bound in Eq.~\eqref{state_dependent_cost} is not trivial.
As an example, we use the $|T\rangle$ state $(D_{\min}^{\STAB}(|T\rangle\langle T|) = 0.2284)$ to calculate the upper bound of cost of creating some magic states.
We present the comparison of the upper bound of our results of cost with the lower bound obtained in \cite{HC17} as a table below. Note that in \cite{HC17} the free operations were stabilizer operations.
 In the table, a general resource state $|U\rangle = U|+\rangle$ where $|+\rangle$ is the maximally coherent state and $U$ is some unitary gate.
Also, some special states include the $|H\rangle$ state which is the single-qubit state with Bloch vector $(1,1,1)/\sqrt{3}$ and has robustness $\sqrt{3}$, $|\chi\rangle$ state is the two-qubit state with maximum robustness of $\sqrt{5}$ for two-qubit states, and $|{\rm Hoggar}\rangle$ state is the three-qubit state which maximizes robustness for three-qubit states and has robustness $3.8$.

\begin{table}[h]
\centering
\captionsetup{type=table}
\begin{tabular}{|>{\centering\arraybackslash}m{0.25\linewidth}|>{\centering\arraybackslash}m{0.25\linewidth}|>{\centering\arraybackslash}m{0.25\linewidth}|}
 \hline
  State & upper bound from  our work	  & lower bound from \cite{HC17}  \\
 \hline
 
 $ |H\rangle $ & 2 & 2 \\
 $ |CS_{1,2}\rangle $ & 3 & 3\\
 $ |T_{1,2,3}\rangle $ & 4 & 3\\
 $ |\chi\rangle $ & 4 & 4\\
 $ |CCZ\rangle $ & 4 & 4\\
 $ |CS_{12,13}\rangle $ & 4& 4\\
 $ |T_1 CS_{2,3}\rangle $ & 5& 4 \\
 $ |T_1CS_{12,13}\rangle $ & 5& 5\\
 $|{\rm Hoggar}\rangle$ & 6 & 6 \\
  \hline
\end{tabular}
\captionof{table}{Comparison of upper bound of magic cost of states}
\end{table}
\end{remark}
\begin{remark}
We would like to emphasize here that we provide a general result for the case of channels by giving a precise formula to find the upper bound on the cost that depends on the log-robustness of magic of the channel and the min-relative entropy of magic of the single-qubit state.
\end{remark}
\begin{proposition}
The cost of converting a pure magic state $\psi_{A_1}\in\md(A_1)$ to a target channel $\mN_B\in \CPTP(B_0\to B_1)$ using CSPO preserving or completely CSPO preserving superchannels is lower bounded by
\begin{equation}
    \dfrac{\LR_g(\mN_B)}{\LR_g(\psi_{A_1})}\leq\cost_{\psi,\mf_{1(2)}}(\mN_B)\,.
\end{equation}
\end{proposition}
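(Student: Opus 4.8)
The plan is to combine the two structural properties of the log-generalized robustness $\LR_g$ that were already established above---monotonicity under free superchannels and sub-additivity under tensor products---with the definition of the cost itself, so that the bound falls out in a couple of lines.

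First I would unpack the definition of the cost. Set $n := \cost_{\psi,\mf_{1(2)}}(\mN_B)$, so that by definition there exists a free superchannel $\Theta \in \mf_{1(2)}(A_1\to B)$ achieving $\Theta[\psi^n] = \mN_B$, where $\psi^n$ denotes the $n$-fold tensor power $\psi^{\otimes n}$ regarded as a state-preparation channel, i.e.\ a channel with trivial input. Since $\psi$ is a genuine magic state, faithfulness of $\LR_g$ guarantees $\LR_g(\psi_{A_1}) > 0$, so division by this quantity is legitimate.

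Next I would invoke the two monotone properties in turn. Because $\Theta$ is free (in either $\mf_1$ or $\mf_2$), monotonicity gives
\begin{equation}
\LR_g(\mN_B) = \LR_g\big(\Theta[\psi^n]\big) \leq \LR_g(\psi^n)\,,
\end{equation}
and applying sub-additivity $n-1$ times yields $\LR_g(\psi^n) \leq n\,\LR_g(\psi_{A_1})$. Chaining these and dividing by $\LR_g(\psi_{A_1})$ gives exactly $\LR_g(\mN_B)/\LR_g(\psi_{A_1}) \leq n = \cost_{\psi,\mf_{1(2)}}(\mN_B)$.

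I do not expect a genuine obstacle, since the argument is a direct application of already-proven properties; the only point needing care is conceptual rather than technical. One must read $\psi^{\otimes n}$ as a channel with trivial input so that the channel-level $\LR_g$, together with its monotonicity under superchannels and its sub-additivity, all apply verbatim---this is the standard embedding of the static resource theory into the dynamical one. Because the same two properties hold for both $\mf_1$ and $\mf_2$, the single argument simultaneously covers the CSPO-preserving and completely-CSPO-preserving cases, matching the $\mf_{1(2)}$ notation in the statement.
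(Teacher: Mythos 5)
Your proof is correct and takes essentially the same route as the paper: the paper's proof invokes a general resource-theoretic conversion bound (Theorem 1 of its reference [RT21]) together with sub-additivity of $\LR_g$, which is exactly your monotonicity-plus-sub-additivity argument spelled out explicitly (and it matches the data-processing derivation the authors left in a comment in the source). Your attention to reading $\psi^{\otimes n}$ as a preparation channel with trivial input, and to faithfulness ensuring $\LR_g(\psi_{A_1})>0$, is consistent with the paper's conventions, so no gap remains.
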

\begin{proof}
The proof follows from the standard resource theoretic methods and can be seen as a special case of theorem 1 of \cite{RT21} together with the sub-additivity of generalized robustness of magic.
\end{proof}

\begin{proposition}\label{upper_bound_distillation}
Given a channel $\mN\in \CPTP(A_0\to A_1)$ and a single qubit state $\psi$, the following holds
\begin{equation}
    {\rm DISTILL}_{\psi,\mf_{1(2)}}(\mN_A)\leq \dfrac{\Hmin(\mN_A)}{D_{\min}^{\STAB}(\psi)}\, .
\end{equation}
\end{proposition}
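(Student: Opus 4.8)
The plan is to bound the number of distilled copies using the monotonicity and additivity of the min-relative entropy of magic. I would work with the exact distillation case (\(\epsilon=0\)) and set \(n=\distill_{\psi,\mf_{1(2)}}(\mN_A)\). By the definition of distillation there exists a free superchannel \(\Theta\in \mf_{1(2)}(A\to B_1)\) achieving \(\Theta[\mN]=\psi^{\otimes n}\), where the target is the preparation channel (trivial input system \(B_0\)) that outputs the \(n\)-qubit state \(\psi^{\otimes n}\).

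The first key step is to note that for a preparation channel the channel monotone collapses to the state monotone. This follows directly from the definition
\eqs{
\Hmin(\mN_A) = \min_{\mE\in \CSPO(A)}\sup_{\psi'\in\md(R_0A_0)}D_{\min}\(\mN(\psi')\|\mE(\psi')\)\,,
}
since with a trivial input the supremum over input states is vacuous and the free maps \(\mE\in\CSPO\) reduce to stabilizer-state preparations. Hence \(\Hmin(\psi^{\otimes n})=\min_{\sigma\in\STAB}D_{\min}(\psi^{\otimes n}\|\sigma)=D_{\min}^{\STAB}(\psi^{\otimes n})\).

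Next I would invoke the monotonicity of \(\Hmin\) under free superchannels (established among the properties of the min-relative entropy of magic above), which gives \(\Hmin(\Theta[\mN])\leq \Hmin(\mN_A)\), and therefore \(D_{\min}^{\STAB}(\psi^{\otimes n})\leq \Hmin(\mN_A)\). Finally, because \(\psi\) is a single qubit state, the additivity of the min-relative entropy of magic for qubits yields \(D_{\min}^{\STAB}(\psi^{\otimes n})=n\,D_{\min}^{\STAB}(\psi)\). Combining these inequalities gives \(n\,D_{\min}^{\STAB}(\psi)\leq \Hmin(\mN_A)\), i.e. the claimed bound.

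The main obstacle is the reduction from the channel monotone to the state monotone for the target preparation channel, paired with the fact that one needs genuine \emph{additivity} of \(D_{\min}^{\STAB}\) (not merely subadditivity), which is available precisely because \(\psi\) is a single qubit state. For the approximate version (\(\epsilon>0\)) one would instead replace \(D_{\min}\) by its smoothed or hypothesis-testing variant and use a continuity argument to absorb the fidelity error \(1-\epsilon\); that smoothing is where most of the technical care would be needed.
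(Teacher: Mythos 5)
Your proof is correct, and it rests on the same essential ingredients as the paper's: the fact that a free superchannel maps CSPOs to CSPOs, the data-processing inequality (these two are exactly what the paper's monotonicity property for $\Hmin$ encapsulates), and the additivity $D_{\min}^{\STAB}(\psi^{\otimes n}) = n\,D_{\min}^{\STAB}(\psi)$ for a single-qubit state. The difference is one of packaging. You invoke the monotonicity of $\Hmin$ as a black box and supply the one missing observation — that for a preparation channel (trivial input $B_0$) the channel monotone collapses to the state monotone, since by the Choi characterization a CSPO with trivial input is exactly a stabilizer-state preparation and the supremum over reference inputs is vacuous — which gives a compact, modular argument for the exact case $\epsilon = 0$, which is all the proposition claims. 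The paper instead unrolls the same chain of inequalities explicitly, routing it through the hypothesis-testing relative entropy: for the optimal superchannel $\Theta$ it shows $k\, D_{\min}^{\STAB}(\psi)\leq \min_{\mE\in \CSPO} D_{\min}(\psi^{\otimes k}\|\Theta[\mE]) \leq \min_{\mE\in \CSPO} D^{\epsilon}_{\rm Hyp}(\mN\|\mE)$, which yields a bound on approximate distillation $\distill^{\epsilon}_{\psi,\mf_{1(2)}}$ for every $\epsilon$, and then specializes to $\epsilon = 0$ where $D^{\epsilon}_{\rm Hyp}$ reduces to $D_{\min}$. So your route is cleaner but covers only exact distillation, whereas the paper's derivation buys the $\epsilon>0$ statement — precisely the extension your closing remark identifies as the place where smoothing would be needed. (Note that both arguments lean on the same $n$-fold additivity fact, which traces back to the multiplicativity of stabilizer fidelity for tensor products of single-qubit states; the paper's appendix only spells out the two-copy case, so this reliance is shared, not a gap specific to your proof.)
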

\begin{proof}
The proof of the above proposition also follows from standard resource theoretic methods \cite{RT21,YZ+20} and the additivity of min-relative entropy of magic. For completeness, we provide the proof in Appendix \ref{upper_bound_distillation_proof}.
\end{proof}

\begin{proposition}
The lower bound on distilling a single qubit pure magic state $\psi$ from a channel $\mN\in \CPTP(A_0\to A_1)$ using a CSPO preserving superchannel is given by
\begin{equation}
    \distill^{\epsilon}_{\psi,\mf_1}(\mN_A) \geq\left\lfloor \frac{D_{\min}^{\epsilon,\,\STAB}(\Tilde{J}^{\mN}_A)}{\LR(\psi)} \right\rfloor\, ,
\end{equation}
where $\Tilde{J}^{\mN}_A$ is the normalized Choi matrix of the channel $\mN$, and $D_{\min}^{\epsilon,\,\STAB}(\cdot)$ represents the hypothesis testing relative entropy of magic which we have defined in Appendix~\ref{append:hyptesting_rel_ent_magic}.
\end{proposition}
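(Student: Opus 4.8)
The plan is to explicitly construct a CSPO preserving superchannel that achieves the claimed number of output copies, and then certify that it succeeds up to error $\epsilon$ by appealing to the operational meaning of the hypothesis testing relative entropy of magic. First I would set $n = \lfloor D_{\min}^{\epsilon,\,\STAB}(\tilde J^{\mN}_A)/\LR(\psi)\rfloor$, so that by definition $n\,\LR(\psi)\leq D_{\min}^{\epsilon,\,\STAB}(\tilde J^{\mN}_A)$. The key observation is that distilling $n$ copies of $\psi$ from $\mN$ is, at the level of Choi matrices, essentially a \emph{state} distillation task: converting the normalized Choi matrix $\tilde J^{\mN}_A$ into the target $\psi^{\otimes n}$ up to fidelity $1-\epsilon$. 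So I would first reduce the channel distillation statement to a statement about distilling $\psi^{\otimes n}$ from the state $\tilde J^{\mN}_A$ under CSPOs, using that the prepare-and-measure superchannel built from a CSPO acting on the Choi state is CSPO preserving (this mirrors the construction used in the cost proposition above, where $\Theta$ was assembled from an optimal CSPO).

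Next I would unpack the definition of $D_{\min}^{\epsilon,\,\STAB}$ from the appendix: it is the hypothesis testing relative entropy minimized over stabilizer states, and by its standard operational interpretation there exists a two-outcome stabilizer-measurement (equivalently, a CSPO) that accepts $\tilde J^{\mN}_A$ with probability at least $1-\epsilon$ while accepting every stabilizer state $\sigma$ with probability at most $2^{-D_{\min}^{\epsilon,\,\STAB}}$. The chain I would then run is the analogue of the cost argument: the robustness $\LR(\psi)$ controls how large the overlap $\tr[\psi^{\otimes n}\sigma]$ can be for stabilizer $\sigma$, namely $-\log \tr[\psi^{\otimes n}\sigma]\geq n\,\LR(\psi)$ using additivity of the relevant quantity for qubits, and the inequality $n\,\LR(\psi)\leq D_{\min}^{\epsilon,\,\STAB}(\tilde J^{\mN}_A)$ is exactly what is needed to match the acceptance bounds. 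Assembling these gives a free superchannel whose output has fidelity at least $1-\epsilon$ with $\psi^{\otimes n}$, so $\distill^{\epsilon}_{\psi,\mf_1}(\mN_A)\geq n$.

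I expect the main obstacle to be the bookkeeping that translates the hypothesis-testing-optimal measurement on the state $\tilde J^{\mN}_A$ into a genuinely CSPO preserving superchannel acting on the channel $\mN$, and in particular verifying that the constructed $\Theta$ lies in $\mf_1$ rather than merely being CSPO on its nominal input. The reason only the $\mf_1$ (CSPO preserving) bound is claimed, and not the $\mf_2$ (completely CSPO preserving) bound, is precisely that checking the completely CSPO preserving condition requires deciding stabilizer membership of the full Choi state of the superchannel, which the authors have flagged as computationally hard; so I would keep the construction at the level where only preservation on CSPO inputs must be verified. The cleanest route is to mimic the superchannel $\Theta[\rho]\coloneqq \mu(\rho)\,\psi^{\otimes n}+(1-\mu(\rho))\,\mM$ form from the cost proof, with $\mu$ the optimal hypothesis-testing acceptance functional, and then use the stabilizer-overlap bound to confirm that the output is a CSPO whenever the input is, which is the defining property of $\mf_1$.
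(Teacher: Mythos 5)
Your construction is essentially the paper's: a superchannel of the form $\Theta[\mM] = \tr[E\tJ^{\mM}_A]\,\psi^{\otimes n}+(1-\tr[E\tJ^{\mM}_A])\,\sigma$, with $E$ the hypothesis-testing optimal POVM element, fidelity certified by $\tr[E\tJ^{\mN}_A]\geq 1-\epsilon$, and membership in $\mf_1$ certified by an acceptance bound on stabilizer inputs. However, the inequality you use to close the argument is false. You claim $-\log\tr[\psi^{\otimes n}\sigma]\geq n\,\LR(\psi)$ for stabilizer $\sigma$, i.e.\ $D_{\min}^{\STAB}(\psi^{\otimes n})\geq n\,\LR(\psi)$. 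By the additivity you invoke, $D_{\min}^{\STAB}(\psi^{\otimes n}) = n\,D_{\min}^{\STAB}(\psi)$, so this would require $D_{\min}^{\STAB}(\psi)\geq\LR(\psi)$; but $D_{\min}\leq D_{\max}$ gives $D_{\min}^{\STAB}(\psi)\leq\LR_g(\psi)\leq\LR(\psi)$, and the gap is strict for magic states --- for the $|T\rangle$ state the paper itself records $D_{\min}^{\STAB}=0.2284$, while ${\rm R}(|T\rangle)=\sqrt{2}$ gives $\LR\approx 1.27$. You have transplanted the wrong ingredient from the cost proof: there the ``detector'' was the state $\psi$ itself, so the stabilizer overlap $\tr[\psi\sigma]$, i.e.\ $D_{\min}^{\STAB}(\psi)$, was the relevant quantity. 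In the distillation proof the detector is $E$, and the overlap $\tr[\psi^{\otimes n}\sigma]$ never enters the argument at all.

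What the proof actually needs is different, and supplying it also fixes the second omission in your sketch (you never specify the replacement channel $\mM$ in your mixture). Take $\sigma$ to be the optimal stabilizer state in the robustness decomposition of the \emph{target}, so that $\bigl(\psi^{\otimes n}+{\rm R}(\psi^{\otimes n})\,\sigma\bigr)/\bigl(1+{\rm R}(\psi^{\otimes n})\bigr)\in\STAB$; then $p\,\psi^{\otimes n}+(1-p)\,\sigma$ is a stabilizer state for every $p\leq 1/\bigl(1+{\rm R}(\psi^{\otimes n})\bigr)$. The role of $\LR(\psi)$ is precisely to guarantee this threshold: for a CSPO input $\mM$, whose normalized Choi matrix is a stabilizer state, $\tr[E\tJ^{\mM}_A]\leq 2^{-D_{\min}^{\epsilon,\,\STAB}(\tJ^{\mN}_A)}\leq 2^{-n\LR(\psi)} = (1+{\rm R}(\psi))^{-n}\leq 1/\bigl(1+{\rm R}(\psi^{\otimes n})\bigr)$, where the last step is sub-multiplicativity of $1+{\rm R}$ under tensor products (log-subadditivity of the robustness), not additivity of any qubit quantity. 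With these two corrections your construction coincides with the paper's proof; the remaining parts of your outline --- the minimax form of $D_{\min}^{\epsilon,\,\STAB}$ (a single $E$ accepting all stabilizer states with probability at most $2^{-D_{\min}^{\epsilon,\,\STAB}}$), the fidelity step, and the reason the claim is restricted to $\mf_1$ rather than $\mf_2$ --- are sound.
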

\begin{proof}
Let $n$ be the largest non-negative integer such that $D_{\min}^{\epsilon,\,\STAB}(\tJ^{\mN}_A) \geq n\LR(\psi)$.
Then, we can construct the following superchannel $\Theta\in \ms(A\to B_1)$ such that for any input channel $\mM\in \CPTP(A_0\to A_1)$ 
    \begin{equation}
        \Theta[\mM] \coloneqq \tr[\Tilde{J}^{\mM}_A E] \psi^{n} + (1 - \tr[\Tilde{J}^{\mM}_A E])\sigma\, ,
    \end{equation}
    where $\sigma\in \STAB(B_1)$ is chosen from the definition of $R(\psi^n)$, and $E$ is the optimal POVM element chosen in the definition of hypothesis testing relative entropy of magic, $D_{\min}^{\epsilon,\,\STAB}(\Tilde{J}^{\mN})$.
    We first notice that for such a superchannel
\begin{align}
    F(\Theta[\mN],\psi^n) &\geq \tr[\Theta[\mN]\psi^n]\\
    &\geq \tr[\Tilde{J}^{\mN} E]\\
    &\geq 1 - \epsilon
\end{align}
where the last inequality comes from the fact that $E$ is optimal in $D_{\min}^{\epsilon,\,\STAB}(\Tilde{J}^{\mN})$.

Since $D_{\min}^{\epsilon,\,\STAB}(\tJ^{\mN}_A) \geq n\LR(\psi)$, we get
\eqs{
-\log\tr[E\sigma] &\geq \log(1+ {\rm R}(\psi))^n
\geq \log(1 + {\rm R}(\psi^{n}))
}
for all $\sigma\in \STAB(A_0A_1)$.
Therefore, if the input $\mM\in \CPTP(A_0\to A_1)$ is a CSPO, then 
$-\log\tr[E\tJ^{\mM}_A] \geq \log(1 + {\rm R}(\psi^{n}))$
which implies that
$$\tr[E\tJ^{\mM}_A]\leq \frac{1}{1+ {\rm R}(\psi^n)}\,.$$
Hence, $\Theta$ is a CSPO preserving superchannel.
Thus, we can distill atleast n copies of the single qubit state $\psi$ from the channel $\mN$ where $n$
satisfies $D_{\min}^{\epsilon,\,\STAB}(\tJ^{\mN}_A) \geq n\LR(\psi)$. 
\end{proof}

\section{Classical Simulation Algorithm for circuits}\label{sec:classical_algo}
The goal of a classical simulation algorithm is to estimate Born rule probabilities or to find the expectation value of an observable. 
To this purpose, a class of algorithms, known as the quasiprobability simulation techniques, have recently been developed that make use of the quasiprobability decomposition of magic states or channels \cite{SRP+20,HC17,SC19,WWS20}.
The runtime of these algorithms has been shown to be of the order of the square of the robustness \cite{HC17, SC19}, or the square of another similar monotone, the dyadic negativity~\cite{SRP+20}.
In \cite{SRP+20}, another simulation technique, the constrained path simulator for states was introduced with the idea to reduce the runtime of the simulation. 
This simulation technique offers constant runtime by compromising with the precision in estimating the expected value.

Below, we extend the constrained path simulator algorithm to the general case of a circuit composed of a sequence of channels acting on an initial stabilizer state and ending with a measurement of some Pauli observable.
We modify the algorithm so that we achieve the estimate with a precision more than or equal to a desired precision.
With this modification, the runtime of the algorithm is not a constant but depends on the desired precision (or the desired error). For any non-zero error, the runtime  never exceeds that of a quasiprobability simulator for channels. Moreover, if there is no bound on the error/precision, the algorithm achieves a constant runtime.


The overall idea of the constrained path simulator for states is as follows. 
A magic state $\rho\in \md(A_1)$ can be decomposed as $\rho = t\sigma_+ - (t-1)\rho_-$ for some $t\geq 1$, $\sigma_{+}\in \STAB(A_1)$, and $\rho_-\in \md(A_1)$.
The constrained path simulator for states works by constraining the quasiprobability decomposition of a state to the positive part, i.e., by making the approximation $\rho\approx t\sigma_+$. 
Then, the algorithm estimates $t\tr[E\mathcal{O}(\sigma_+)]$ upto $\epsilon$ error using a Clifford simulator (like quasiprobability simulator).
Here, $E$ is some Pauli observable, and $\mathcal{O}$ is a CSPO.
This estimate is then used to obtain the expectation value $\tr[E\mathcal{O}(\rho)]$ and the estimation error.
The runtime of the algorithm is decided by the Clifford simulator used.
By defining $\epsilon$ as the product of a constant $c$ and $t$, the algorithm was shown to have a constant runtime.

\paragraph*{Constrained path simulator for channels.}

Let $\mN$ be a circuit composed of a sequence of $n$ channels and let the $i^{\rm th}$ circuit element be denoted by $\mN_i$.
As mentioned previously in remark~\ref{remark:channel_decomp}, the circuit element $\mN_i$ can be decomposed using some CSPO $\mE_i$ and some other channel $\mM_i$ such that $\mN_i = \lambda_i\mE_i - (\lambda_i - 1)\mM_i$ where $\lambda_i$ is the generalized robustness of $\mN_i$.
Then, for the whole circuit we can write
\begin{widetext}
\ba
\mN=\mN_n\circ \cdots \circ \mN_1  &= (\lambda_n\cdots\lambda_1)(\mE_n\circ\cdots\circ\mE_1 ) + \ldots + ((\lambda_n -1)\cdots(\lambda_1 - 1)) \mM_n\circ\cdots\circ\mM_1 \\
&=(\lambda_n\cdots\lambda_1)(\mE_n\circ\cdots\circ\mE_1 ) +  ((\lambda_n\cdots\lambda_1)-1)\mM\\
&= \lambda \mE + (\lambda - 1)\mM
\ea
\end{widetext}
where $\lambda = \lambda_n\cdots\lambda_1$, $\mE = \mE_n\circ\cdots\circ\mE_1$ and $\mM$ follows from simple arithmetic manipulation of the first equation and is the probabilistic combination of the sequence of channels where each sequence contains atleast one $\mM_i$.
The aim of the algorithm is to estimate $\tr[E\mN(|0\rangle\langle 0|)]$ with a precision more than or equal to some target precision and a runtime less than what can be achieved by a quasiprobability simulator.

The algorithm starts by replacing the original circuit $\mN$ with another circuit $\mN'$ to achieve the mean estimate up to some target error $\Delta^*$.
The algorithm first replaces the channel $\mN_j$ with $\lambda_j\mE_j$ if $\lambda_j$, the generalized robustness of $
\mN_j$, is less than some fixed real number $\lambda^*$.
Here, $\mE_j$ is the optimal CSPO such that $\lambda_j\mE_j \geq \mN_j$.
The choice of $\lambda^*$ ensures that the estimation error never exceeds the target allowed error.
Then, using the static Monte Carlo routine introduced in \cite{SC19} for circuits, the algorithm estimates $\lambda'\tr[E\mN'(|0\rangle\langle 0|)]$ up to $\epsilon$ error where $\lambda'$ is the product of the generalized robustnesses of the replaced channels and the error $\epsilon$ equals a constant $c$ multiplied with $\lambda'$.
Next, using  $\epsilon$, $\lambda'$, and the estimate we obtained above, the algorithm outputs the estimate of the expectation value $\tr[E\mN(|0\rangle\langle 0|)]$ up to error $\Delta\leq \Delta^*$ following some trivial steps.

In the static Monte Carlo routine, the runtime of the algorithm is decided by finding the total number $N$ of steps required to achieve the mean estimate up to an additive error $\epsilon$ with success probability $1-p_{\rm fail}$. 
The number of steps $N$ that the static Monte Carlo takes is given by
\begin{align}
    N = \left\lceil 2 \epsilon^{-2}\|q\|_1^2\log(2p_{\rm fail}^{-1}) \right\rceil
\end{align}
where $\|q\|_1 = \prod_j{\rm R}(\mN_j)$ and ${\rm R}(\mN_j)$ is the robustness of the circuit element $\mN_j$ as defined in \cite{SC19}.
In our hybrid algorithm, since we choose to keep some channels and replace some with CSPOs, the number of steps to estimate  $\lambda\tr[E\mN'(|0\rangle\langle 0|)]$ upto $\epsilon$ error with success probability $1 - p_{\rm fail}$ is given by
\begin{align}
    N &= \left\lceil 2\epsilon^{-2}\lambda^2\prod_{j:\lambda_j>\lambda^*}{\rm R}(\mN_j)^2\log(2p_{\rm fail}^{-1})\right\rceil\\
    &= \left\lceil 2c^{-2}\prod_{j:\lambda_j>\lambda^*}{\rm R}(\mN_j)^2\log(2p_{\rm fail}^{-1})\right\rceil
\end{align}
where $c$ is a pre-defined small constant. In this sense, the number of steps only depend on the robustness of the channels whose $\lambda_i > \lambda^*$.
Note that if all the channels are selected by the algorithm, we essentially have the runtime as that of static Monte Carlo routine.
If all the channels are replaced in the initial steps then we get a constant runtime.

\begin{algorithm}
\renewcommand{\thealgorithm}{}
\caption{Dynamic constrained path simulator}
\flushleft{\textbf{Input:}
    (i) Sequence of channels $\mN_1,\ldots, \mN_n$ such that the target channel $\mN = \mN_n\circ \cdots \circ \mN_1$.
    (ii) Real numbers $0< c, p_{\rm fail}<<1$ and Pauli observable $E$.
    (iii) Desired error $\Delta^*$.\\}
 \textbf{Pre-Computation:}
    (i) $\lambda^* = (\Delta^* + 1)^{1/n}$.
    (ii) For each circuit element, an optimal decomposition in terms of CSPOs is determined.\\
 \textbf{Output:}
    (i) Born rule probability estimate $\hat{E}$.
    (ii) Error $\Delta$ such that,
    $|\hat{E} - \tr\left[E\, \mN(|0\rangle\langle 0|)\right]|\leq \Delta$, and $\Delta\leq \Delta^*$. \\
	\begin{algorithmic}[1]
  		\FOR{$i$ $\leftarrow$ $1$ to $n$}
        	    \STATE$\lambda_i\leftarrow \Lambda^+(\mN_i)$, and denote the optimal free channel by $\mE_i$.
            	\IF{$\lambda_i\leq \lambda^*\;$:}
                	\STATE $\mN_i\leftarrow\lambda_i \mE_i$
            	\ENDIF
    	\ENDFOR
    	\STATE $\mN' \leftarrow \left(\prod_{j:\lambda_j\leq \lambda^*}\lambda_j\right)\left(\mF_n\circ\cdots\circ \mF_1\right)$, where $\mN'$ denotes the new circuit that will be used to find the estimate and $\mF_k$'s denote the circuit elements given be 
    	\begin{equation}
        \mF_k=
        \begin{cases}
            \mE_k\quad {\rm if}\; \lambda_k \leq \lambda^*\\
            \mN_k\quad {\rm otherwise}
        \end{cases}
    	\end{equation}
    	\STATE $\epsilon\leftarrow c\lambda$ where $\lambda = \prod_{j:\lambda_j\leq \lambda^*}\lambda_j$
    	\STATE Let $E_{\mN'}$ be an estimate of $\lambda\tr[E\mN'(|0\rangle\langle 0|)]$ upto $\epsilon$ error and success probability $1 - p_{\rm fail}$.
    	\STATE $E_{\max}\leftarrow \min\{ 1, E_{\mN'}+ \epsilon + \lambda - 1\}$
    	\STATE $E_{\min}\leftarrow \max\{ -1, E_{\mN'}- \epsilon - \lambda + 1\}$
    	\STATE $\hat{E}\leftarrow (E_{\max}+E_{\min})/2$
    	\STATE$\Delta\leftarrow (E_{\max}-E_{\min})/2$
	\end{algorithmic}
\end{algorithm}

\textbf{Analysis}

As with the constrained path simulator for states, the choice of $E_{\max}$ and $E_{\min}$ ensure that for all $\lambda$ and $E_{\mN'}$, the following inequality holds with probability $1 - p_{\rm fail}$
\begin{equation}
    |\hat{E} - \tr[E\mN(|0\rangle\langle 0|)]|\leq \Delta
\end{equation}
To justify the choice of $\lambda^*$, let $\lambda^*$ be the generalized robustness of each channel used in the circuit and $\lambda^*$ times the optimal CSPO for each channel is considered in the above routine.
Then, for any $\Delta$ we have
\begin{align}
    \Delta \leq \lambda(1 + c) - 1
\end{align}
and hence we require $\lambda(1+c) - 1 \leq \Delta^*$. Assuming there are $n$ channels in the circuit, we get
\begin{align}
    \lambda^* &\leq \left(\frac{\Delta^* + 1}{1 + c}\right)^{1/n}\\
    &\approx (\Delta^* + 1)^{1/n}
\end{align}
Since this is the worst-case analysis, in practical scenarios we will have $\lambda \leq \lambda^*$(equality only arising when the circuit consists of just one channel applied $n$ times), and therefore
$\Delta\leq \Delta^*$.

\section{Conclusion}

In this work, we developed the dynamical multi-qubit resource theory of magic of quantum channels by identifying the completely stabilizer preserving operations (CSPOs) as the set of free operations. 
CSPOs are a perfect candidate for the free channels of a dynamical resource theory of magic because they form the largest known set of operations that cannot provide any quantum advantage.
In previous resource theoretic studies of magic channels, the superchannel approach was only taken in \cite{WWS19} where the authors considered the odd-dimensional qudit case and the free channels were the completely positive Wigner preserving operations (CPWPO). 
There, the free superchannels were chosen to be the ones that completely preserve the set of CPWPO.
In this paper, we defined and characterized two sets of free superchannels - namely, the CSPO preserving superchannels and the completely CSPO preserving superchannels.
We characterized completely CSPO preserving superchannels in terms of their Choi matrices, and in particular, we showed that a superchannel is completely CSPO preserving if and only if its normalized Choi matrix is a stabilizer state.
We then defined monotones for states and channels which include the generalized robustness of magic for channels, the min-relative entropy of magic for channels, and the geometric measure of magic for states. 
We also addressed some resource interconversion problems, specifically proving that the qubit interconversion under CSPOs can be solved with simple linear programming.
We then determined a closed formula for the upper and lower bound on both the cost of simulating a channel from a qubit and distilling a qubit  magic state from a channel, under CSPO preserving superchannels. We also formulated the lower bound on the qubit cost of simulating a magic channel, and the upper bound on distilling a pure qubit magic state from a magic channel under completely CSPO preserving superchannels using the standard resource theoretic techniques.
Finally, we gave a classical simulation algorithm to find expectation values given a general quantum circuit. 
The algorithm works by selecting and replacing some circuit elements with some CSPO, based on a parameter that depends on the minimum target precision required.
Hence, due to this selective replacement algorithm, the runtime of our algorithm also depends on the precision required.
If the precision required is too tight, then the runtime reaches that of the static Monte Carlo simulation algorithm given in \cite{SC19}, whereas, if there is no bound on the precision, the algorithm has a constant runtime and can be seen as a generalization of the constrained path simulator introduced in \cite{SRP+20} for states.
These classical simulation algorithms help benchmark the quantum computational speedup and there is a lot left to explore in the general circuit case.
Apart from that, it would be interesting to explore non-deterministic transformations and catalytic transformations under CSPO preserving and completely CSPO preserving superchannels.
Lastly, because of the difficulty in verifying whether a state is a stabilizer or not, we were unable to find lower bounds on distilling magic using completely CSPO preserving operations and leave it as an open problem.

\section*{Acknowledgements}
G.S. would like to thank Bartosz Regula, James Seddon, Carlo Maria Scandolo, and Yunlong Xiao for helpful discussions. G.S. would also like to thank the anonymous QIP 2022 reviewers for their useful comments.
G.G. is supported by the Natural Sciences and Engineering Research Council (NSERC) of Canada.
\bibliography{ref}

\onecolumngrid
\begin{appendix}

\section{Superchannels}\label{superchannel_desription}

A superchannel is a linear map that takes a quantum channel to another quantum channel.
In other words, we can say that a superchannel $\Theta$ describes the evolution of a quantum channel $\mN\in \CPTP(A_0\to A_1)$ to a target channel $\mM\in \CPTP(B_0\to B_1)$ as
\begin{equation}\label{append:channel_evol}
 \Theta_{A\to B}[\mN_A] = \mM_B
\end{equation}
and even when acting on part of the channel as
\eqs{\1_{R}\otimes \Theta_{A\to B}[\mN_{AR}] = \mM_{BR}
}
where $\mN_{AR}\in \CPTP(A_0R_0 \to A_1R_1)$, $\mM_{BR}\in \CPTP(B_0R_0 \to B_1R_1)$, and  $\1_R$ denotes the identity superchannel that takes the dynamical system $R$ to $R$.
A superchannel can be realized in terms of a pre- and a post-processing channel.
Let $\mE\in \CPTP(B_0\to E_1A_0)$ be the pre-processing channel and $\mF\in \CPTP(E_1A_1\to B_1)$ be the post-processing channel for a superchannel $\Theta\in \ms(A\to B)$, then the LHS describing the evolution in Eq.~\eqref{append:channel_evol} can be written as
\eqs{
\Theta_{A\to B}[\mN_A] = \mF_{E_1A_1\to B_1}\circ\mN_{A_0\to A_1}\circ\mE_{B_0\to E_1A_0}
}
Apart from that the transformation of Eq.~\eqref{append:channel_evol} can also be written using Choi matrices of channels $\mN$, $\mM$, and the superchannel $\Theta$ as
\eqs{
J^{\mM}_B = \tr_{A}\[\J^{\Theta}_{AB}\(\(J^{\mN}_{A}\)^{T}\otimes I_B \) \]
}
where the Choi matrix of a superchannel is defined in terms of a linear map $\mQ^{\Theta}$
\eqs{
\J^{\Theta}_{AB} = \mQ^{\Theta}_{\tilde{A}_1\tilde{B}_0\to A_0B_1}\(\phi^+_{A_1\tilde{A}_1}\otimes \phi^+_{B_0\tilde{B}_0} \)
}
where the linear map takes bounded operators in $\tilde{A}_1\tilde{B}_0$ to bounded operators in $A_0B_1$. More details about supermaps and superchannels can be found in \cite{CDP08, CDP09, CDP+13, BP19, BGS+20, G19}.

Lastly note that the Choi matrix of a superchannel $\Theta\in \ms(A\to B)$ follows the following conditions \cite{G19}:
\eqs{
\J^{\Theta}_{AB}&\geq 0\; ,\\
\J^{\Theta}_{A_1B_0} &= I_{A_1B_0}\; ,\\
\J^{\Theta}_{AB_0} &= \J^{\Theta}_{A_0B_0}\otimes \dfrac{I_{A_1}}{|A_1|}\; .
}

\section{Interconversion Distance}\label{subsec:interconv_dist}
We define the interconversion distance from a state $\rho\in \md(A_0)$ to another state $\sigma\in \md(A_1)$ as
\begin{align}
    d(\rho_{A_0} \to \sigma_{A_1}) &= \frac{1}{2}\min_{\mE \in \CSPO(A_0\to A_1)} \|\mE(\rho) - \sigma \|_1\\
    &=\min_{\mE \in \CSPO}\left( \max_{0\leq P \leq I} \tr\left[\left(\mE(\rho) - \sigma\right)P\right] \right)
\end{align}
Using the dual of trace norm, we can express the above interconversion distance as follows
\begin{align}
    d(\rho\to \sigma) = \min &\tr[X+Y]\\
    \rm{s.t.} & 
    \begin{pmatrix}
        X & \mE(\rho) - \sigma\\
        \mE(\rho) - \sigma & Y
\end{pmatrix}\geq 0\, ,\\
& X\geq 0\,,\,\, Y\geq 0\,,\\
& J^{\mE}_{A_0A_1}\geq 0\, ,\,\, J^{\mE}_{A_0} = I_{A_0}\, ,\\
&\frac{J^{\mE}_{A_0A_1}}{|A_0|}\in \STAB 
\end{align}

\section{Proof of additivity of min-relative entropy of magic for qubits}\label{append:min_relent_additivity}

To prove the additivity of min-relative entropy of magic for qubits, first note that the projector onto the support of a qubit state is identity if the state is mixed, else it is the state itself if it is pure.
For the proof, we construct the following four possible cases for qubits $\rho_1$ or $\rho_2$
\begin{enumerate}
    	\item For $\rho_1, \rho_2 > 0$, we  get
    	\begin{align}
     	   D_{\min}^{\STAB}(\rho_1\otimes \rho_2) &= -\log_2 \max_{\psi\in \overline{\STAB}} \tr[(P_{\rho_1}\otimes P_{\rho_2})\psi]\\
    	    &= -\log_2 \max\tr[(I\otimes I)\psi]\\
     	   &=0\\
     	   &=D_{\min}^{\STAB}(\rho_1) +  D_{\min}^{\STAB}(\rho_2)
    	\end{align}
    	\item For $\rho_1 > 0$ and $\rho_2 = |\chi\rangle\langle\chi |$, we get
    	\eqs{
    	    D_{\min}^{\STAB}(\rho_1\otimes \rho_2) &= -\log_2 \max_{\psi\in \overline{\STAB}(A_1 A_2)} \tr[(P_{\rho_1}\otimes P_{\rho_2})\psi]\\
      	  &= -\log_2 \max_{\phi\in \overline{\STAB}(A_2)}\tr[|\chi\rangle\langle\chi |\phi]\\
      	  &= D_{\min}^{\STAB}(\rho_2)\\
      	  &=D_{\min}^{\STAB}(\rho_1)+D_{\min}^{\STAB}(\rho_2)
    	}
    	\item For $\rho_1 = |\chi\rangle\langle\chi|$ and $\rho_2>0$, we get the same result as obtained in 2, i.e., 
  	  \begin{align}
  	      D_{\min}^{\STAB}(\rho_1\otimes \rho_2) = D_{\min}^{\STAB}(\rho_1)+D_{\min}^{\STAB}(\rho_2)
 	   \end{align}
  	  \item For the case when both $\rho_1$ and $\rho_2$ are pure and let $\rho_1 = |\chi\rangle\langle\chi |$ and $\rho_2 = |\omega\rangle\langle\omega|$, we get
   	 \begin{align}
   	     D_{\min}^{\STAB}(\rho_1\otimes \rho_2) &= -\log_2\max_{\psi\in \overline{\STAB}}\tr[(|\chi\rangle\langle \chi|\otimes |\omega\rangle\langle\omega|)\psi]\\
    	    &= -\log_2 F(|\chi\rangle\langle \chi|\otimes |\omega\rangle\langle\omega|)\\
        	&=-\log_2 \left( F(|\chi\rangle\langle \chi|)F(|\omega\rangle\langle\omega|) \right)\\
        	&= D_{\min}^{\STAB}(\rho_1)+D_{\min}^{\STAB}(\rho_2)
    	\end{align}
    	where the second equality follows from the definition of stabilizer fidelity as defined in \cite{BB+19}. The third equality follows from Theorem 5 and Corollary 3 of \cite{BB+19}. 
	\end{enumerate}
	Therefore, for single-qubit states we find that the min-relative entropy of magic is additive.

\section{Robustness of magic}\label{append:rob_of_magic}
We define the robustness of magic of a quantum state as
\begin{equation}
    {\rm R}(\rho) = \min\left\{ \lambda \geq 0 : \frac{\rho + \lambda \sigma}{\lambda + 1}\in \STAB,\; \sigma \in \STAB\right\}
\end{equation}
which is slightly different from how it was originally defined in \cite{HC17}. We use this definition because any resource monotone must be zero for free elements.
Likewise, we define channel robustness of magic of a quantum channel $\mN$ as
\begin{equation}
    {\rm R}(\mN_A) = \min\left\{ \lambda \geq 0: \frac{\mN + \lambda \mE}{ \lambda + 1}\in \CSPO,\; \mE\in \CSPO\right\}
\end{equation} 
which agains differs slightly from the definition of channel robustness of magic in \cite{SC19}.

Both these quantities are magic monotones and are sub-multiplicative under tensor products.
Therefore, the log of the robustness of magic (denoted as $\LR$) is sub-additive i.e.,
\begin{align}
    \LR(\rho^{\otimes m}) &\leq m\LR(\rho)\, ,\\
    \LR(\mN^{\otimes m}) &\leq m\LR(\mN)   \, . 
\end{align}
where $\LR(\rho) = \log(1+ {\rm R}(\rho))$ and $\LR(\mN) = \log(1+ {\rm R}(\mN))$.

\section{Hypothesis testing relative entropy of magic}\label{append:hyptesting_rel_ent_magic}

The hypothesis testing relative entropy of magic or the operator smoothed min-relative entropy of magic is defined as
\begin{align}
    D_{\min}^{\epsilon,\,\STAB}(\rho) &= \min_{\sigma\in \STAB}D^{\epsilon}_{\min}(\rho\|\sigma)\\
    &= \min_{\sigma\in \STAB}(-\log \min \tr[E\sigma]\\
    &\qquad\qquad\qquad\quad\; {\rm s.t.}\;\; 0\leq E\leq I,\\
    &\qquad\qquad\qquad \qquad\quad\, \tr[E\rho]\geq 1 - \epsilon\;)
\end{align}
For $\epsilon=0$, the hypothesis testing relative entropy of magic becomes equal to the min-relative entropy of magic, i.e., $D_{\min}^{\epsilon = 0,\,\STAB}(\rho)  = D_{\min}^{\STAB}(\rho)$.

\section{Proof of proposition \ref{upper_bound_distillation}} \label{upper_bound_distillation_proof}
First we note that for any $\mE_A\in \CPTP(A)$ and any $\Theta\in \CSPSC(A\to B_1)$ we have
    \begin{align}
        D_{\min}\(\psi^{\otimes k}\| \Theta[\mE]\) &= -\log_2 \tr\[\psi^{\otimes k} \Theta[\mE]\]\\
        &\geq D_{\min}^{\STAB}(\psi^{\otimes k})\\
        &= k\, D_{\min}^{\STAB}(\psi)
    \end{align}
    where the inequality follows from the definition of min-relative entropy of magic for states and the last equality follows from its additivity for single-qubit states.

 The hypothesis testing relative entropy \cite{BD10,WR12} between two states $\rho_1$ and $\rho_2$ is given by
    \begin{equation}
        D^{\epsilon}_{\rm Hyp}(\rho_1\| \rho_2) \coloneqq -\log_2 \min \{\tr[M\rho_2]: 0\leq M\leq I\,,\; \tr[M\rho_1]\geq 1-\epsilon \}\, .
    \end{equation}
     and its channel counterpart can be given as
     \begin{equation}
     D^{\epsilon}_{\rm Hyp}\(\mN_A\|\mM_A\)\coloneqq \sup_{\psi\in \md(R_0 A_0)}D^{\epsilon}_{\rm Hyp}\( \mN(\psi_{R_0A_0})\|\mM(\psi_{R_0 A_0}) \)
     \end{equation}
Using this definition, we then have
\eqs{
k\, D_{\min}^{\STAB}(\psi)&\leq \min_{\mE\in CSPO} D_{\min}(\psi^{\otimes k}\| \Theta[\mE])\\
        &\leq \min_{\mE\in CSPO} D_{\rm Hyp}^{\epsilon}(\Theta[\mN]\| \Theta[\mE])\\
        &\leq \min_{\mE\in CSPO} D_{\rm Hyp}^{\epsilon}(\mN\| \mE)
}
where the second inequality follows from the definition of hypothesis testing relative entropy and the last inequality follows from the data-processing inequality.
And therefore, we get
\begin{equation}
{\rm DISTILL}^{\epsilon}_{\psi}(\mN_A)\leq \dfrac{\min_{\mE\in \CSPO(A_0\to A_1)}D^{\epsilon}_{\rm Hyp}(\mN\|\mE)}{D_{\min}^{\STAB}(\psi)}
\end{equation}
which for exact distillation process (i.e., $\epsilon = 0$) will become
\begin{equation}
{\rm DISTILL}_{\psi}(\mN_A)\leq \dfrac{\Hmin(\mN_A)}{D_{\min}^{\STAB}(\psi)}
\end{equation}
\section{Single qubit Unitary CSPOs}\label{append:unitary_cspo}
Table~\ref{unitaryCSPO_and_Choi} lists the set of 24 unitary gates which are completely stabilizer preserving along with corresponding (unnormalized) Choi matrices.
Table~\ref{Bloch_vector_transformations} gives an account of the states generated by these unitary CSPOs. Since a single qubit state can be represented as a vector $(r_1,r_2,r_3)^T$ in the Bloch sphere, we will give below the vectors to which this vector transforms on the application of the above unitaries.

\noindent\begin{minipage}{.45\linewidth}
  \centering
\captionsetup{type=table}
  \begin{tabular}{ |c|c| } 
 \hline
 Unitary gate & state corresponding to\\
  &associated Choi matrix  \\
 \hline
 $I$ & $|00\rangle + |11\rangle$  \\ 
 $X$ & $|01\rangle + |10\rangle$  \\
 $Z$ & $|00\rangle - |11\rangle$  \\
 $XZ$ & $|01\rangle - |10\rangle$  \\
 \hline
 $H$ & $|0+\rangle + |1-\rangle$  \\
 $HX$ & $|0-\rangle + |1+\rangle$  \\
 $HZ$ & $|0+\rangle - |1-\rangle$  \\
 $HXZ$ & $|0-\rangle - |1+\rangle$  \\
 \hline
 $S$ & $|00\rangle + i|11\rangle$  \\
 $XS$ & $|01\rangle + i|10\rangle$  \\
 $ZS$ & $|00\rangle - i|11\rangle$  \\
 $XZS$ & $|01\rangle - i|10\rangle$  \\
 \hline
 $HS$ & $|0+\rangle + i|1-\rangle$  \\
 $HSZ$ & $|0+\rangle -i|1-\rangle$  \\
 $HXS$ & $|0-\rangle +i|1+\rangle$  \\
 $HXSZ$ & $|0-\rangle -i|1+\rangle$  \\
 \hline
 $SH$ & $|0\,+i\rangle + |1\,-i\rangle$  \\
 $SHZ$ & $|0\,+i\rangle - |1\,-i\rangle$  \\
 $SHX$ & $|0\,-i\rangle + |1\, +i\rangle$  \\
 $SHXZ$ & $|0\, -i\rangle - |1\, +i\rangle$  \\
 \hline
 $SHS$ & $|0\,+i\rangle + i|1\,-i\rangle$  \\
 $SHSZ$ & $|0\,+i\rangle -i |1\,-i\rangle$  \\
 $SHSX$ & $i|0\,-i\rangle + |1\,+i\rangle$  \\
 $SHSXZ$ & $i|0\,-i\rangle - |1\,+i\rangle$  \\
 \hline
\end{tabular}
\captionof{table}{Unitary CSPOs and their Choi matrices.}
\label{unitaryCSPO_and_Choi}
\end{minipage}\hfill
\begin{minipage}{.45\linewidth}
  \centering
  \captionsetup{type=table}
  \begin{tabular}{ |c|c| } 
 \hline
 Unitary gate & Transformed vector   \\
 \hline
 $I$ & $r_1,r_2,r_3$\\
 $SH$ & $r_2,r_3,r_1$\\
 $HSZ$ & $r_3,r_1,r_2$\\
 \hline
 $X$ & $r_1,-r_2,-r_3$\\
 $SHXZ$ & $r_2,-r_3,-r_1$\\
 $HS$ & $r_3,-r_1,-r_2$\\
 \hline
 $Z$ & $-r_1,-r_2,r_3$\\
 $SHX$ & $-r_2,-r_3,r_1$\\
 $HXSZ$ & $-r_3,-r_1,r_2$\\
 \hline
 $Y$ & $-r_1,r_2,-r_3$\\
 $SHZ$ & $-r_2,r_3,-r_1$\\
 $HXS$ & $-r_3,r_1,-r_2$\\
 \hline
 $SHS$ & $r_1,r_3,-r_2$\\
 $HZ$ & $r_3,r_2,-r_1$\\
 $XZS$ & $r_2,r_1,-r_3$\\
 \hline
 $SHSX$ & $r_1,-r_3,r_2$\\
 $H$ & $r_3,-r_2,r_1$\\
 $ZS$ & $r_2,-r_1,r_3$\\
 \hline
 $SHSZ$ & $-r_1,r_3,r_2$\\
 $HX$ & $-r_3,r_2,r_1$\\
 $S$ & $-r_2,r_1,r_3$\\
 \hline
 $SHSXZ$ & $-r_1,-r_3,-r_2$\\
 $HY$ & $-r_3,-r_2,-r_1$\\
 $XS$ & $-r_2,-r_1,-r_3$\\
 \hline
\end{tabular}
 \captionof{table}{Possible transformations of a Bloch vector using unitary CSPOs.}
 \label{Bloch_vector_transformations}
\end{minipage}

\section{Geometrical interpretation of theorem~\ref{thm:qubit_interconversion}}\label{append:geometrical_interpretation}

To find whether a qubit can be converted to another using CSPOs, from Eq.~\eqref{eq:convex_comb} we get that it is enough to check whether the target state (or any of its Clifford equivalent state) lies outside the facets of the convex polytope (generated by the original state) that together cover any subset of any octant. 
For convenience, let us choose this subset to be the positive octant $(+X,+Y,+Z)$ for which $|\langle X \rangle|\leq |\langle Y\rangle|,|\langle Z\rangle|$ and denote it by $P_X$.
Hence, it is enough to find only those extreme points of the convex polytope which are used to form the facets that together cover $P_X$.
Using the hyperplane separation theorem, we can then find whether the target state lies inside this convex polytope.
Now, let the Bloch vector corresponding to $\rho$ (or its Clifford equivalent state) belonging to $P_X$ be denoted by $\mathbf{r_1} = (r_x, r_y, r_z)$.
We denote the neighbouring Clifford equivalent states which are used to form the facets of the convex polytopes as
\ba 
\mathbf{r_2} &= (r_z, r_x, r_y)\, ,\\
\mathbf{r_3} &= (r_y, r_z, r_x)\, ,\\
\mathbf{r_4} &=(-r_x, r_z, r_y)\, ,\\
\mathbf{r_5} &=(-r_y, r_x, r_z)\, ,\\
\mathbf{r_6} &= (r_y,-r_x, r_z)\, ,\\
\mathbf{r_7} &= (0, 0, 1)\, ,\\
\mathbf{r_8} &= (0, 1, 0)\, ,\\
\mathbf{r_9} &=(-r_z, r_y, r_x)\, ,\\
\mathbf{r_{10}} &= (r_z, r_y,-r_x)\, .
\ea
Now depending on the location of $\mathbf{r_1}$ in $P_X$, there are three possible ways to form a convex polytope.
Since we are only interested in the facets of these polytopes that cover $P_X$, we list below the set of vectors which, for each possible polytope, form a facet partially covering $P_X$ :

\textbf{Possibility 1:} $(\mathbf{r_1},\mathbf{r_6},\mathbf{r_7}),
						(\mathbf{r_1},\mathbf{r_7},\mathbf{r_5}),
						(\mathbf{r_1},\mathbf{r_5},\mathbf{r_4}),
						(\mathbf{r_1},\mathbf{r_4},\mathbf{r_3}),
						(\mathbf{r_1},\mathbf{r_3},\mathbf{r_2}),
						(\mathbf{r_1},\mathbf{r_2},\mathbf{r_6}),
						(\mathbf{r_3},\mathbf{r_4},\mathbf{r_8})$
						
\textbf{Possibility 2:} $(\mathbf{r_1},\mathbf{r_3},\mathbf{r_2}),
						(\mathbf{r_1},\mathbf{r_2},\mathbf{r_7}),
						(\mathbf{r_1},\mathbf{r_7},\mathbf{r_4}),
						(\mathbf{r_1},\mathbf{r_4},\mathbf{r_8}),
						(\mathbf{r_1},\mathbf{r_8},\mathbf{r_3})$
						
\textbf{Possibility 3:} $(\mathbf{r_1},\mathbf{r_{10}},\mathbf{r_3}),
						(\mathbf{r_1},\mathbf{r_3},\mathbf{r_2}),
						(\mathbf{r_1},\mathbf{r_2},\mathbf{r_4}),
						(\mathbf{r_1},\mathbf{r_4},\mathbf{r_9}),
						(\mathbf{r_1},\mathbf{r_9},\mathbf{r_8}),
						(\mathbf{r_1},\mathbf{r_8},\mathbf{r_{10}}),
						(\mathbf{r_4},\mathbf{r_2},\mathbf{r_7})$\\
In figures \ref{fig:poss1} and \ref{fig:poss2}, we have marked the location of the points in possibility 1 and possibility 2, respectively, highlighted (with red arcs) the subset they belong to, and connected the points in the way they are connected in the convex polytope for  a particular possibility.
\begin{figure}
\centering
\begin{minipage}{.5\textwidth}
  \centering
  \includegraphics[width=.8\linewidth]{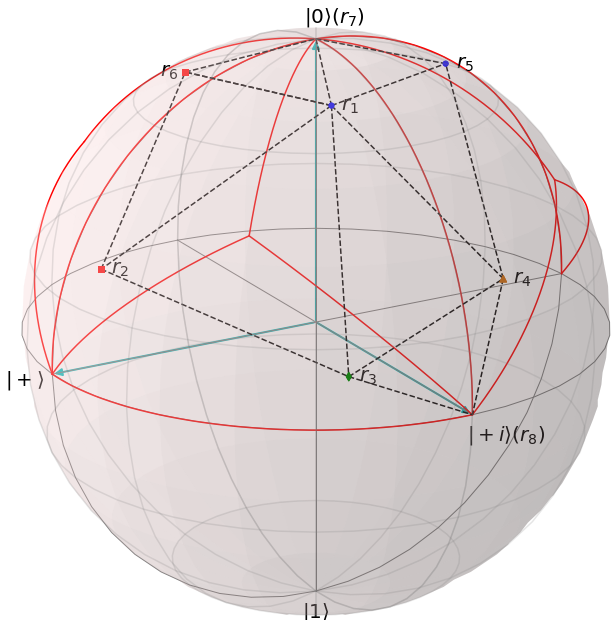}
  \captionof{figure}{Points corresponding to possibility 1}
  \label{fig:poss1}
\end{minipage}%
\begin{minipage}{.5\textwidth}
  \centering
  \includegraphics[width=.8\linewidth]{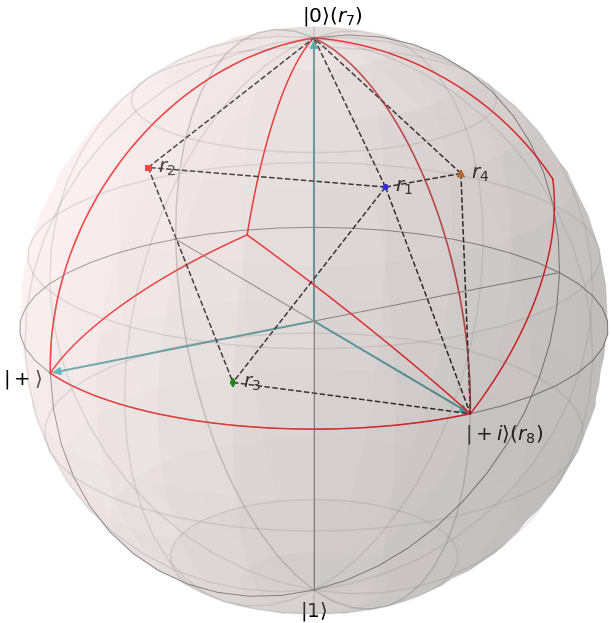}
  \captionof{figure}{Points corresponding to possibility 2}
  \label{fig:poss2}
\end{minipage}
\end{figure}
						
Using these set of vectors for each possible convex polytope, it is straightforward to find the vector (say $\mathbf{v}$) perpendicular to each facet such that the inner product of $\mathbf{v}$ with all vectors lying inside that facet is less than or equal to the inner product of $\mathbf{v}$ with one of the vectors on the surface of the facet. Let's call this inner product as $v$.
All the vectors on the other side of this facet will then give a value more than $v$ when their inner product is calculated with $\mathbf{v}$.
Therefore, by finding all such vectors perpendicular to each facet, we find the conditions to verify whether a vector lies inside or outside the facets.
Hence, a state $\rho$ can be converted to a state $\sigma$ using completely stabilizer preserving operations if and only if
\ba
\mathbf{s}\cdot \mathbf{u_i} &\leq u_i,\; \forall \, i ={1,\ldots, 7}\quad{\rm or}\\
\mathbf{s}\cdot \mathbf{v_j} &\leq v_j,\; \forall\, j ={1,\ldots, 5}\quad{\rm or}\\
\mathbf{s}\cdot \mathbf{w_k} &\leq w_k,\; \forall\, k ={1,\ldots, 7}
\ea
where $\mathbf{s}$ corresponds to the Bloch vector of the Clifford equivalent state of $\sigma$ in $P_X$. The vectors $\mathbf{u_i}$'s, $\mathbf{v_j}$'s, and $\mathbf{w_k}$'s are the vectors perpendicular to the facets of the respective possible polytopes, and $u_i$'s, $v_j$'s, and $w_k$'s are the constants which can be calculated from the inner product of $\mathbf{u}_i$, $\mathbf{v}_j$, and $\mathbf{w}_k$ with any vector lying on the surfaces of the respective facets of the possible polytopes.
\begin{remark}
The code for the above interconversion has been uploaded in a public git repository and can be freely accessed using the link in \footnote{https://github.com/gaurav-iiser/Resource-Theory-of-multiqubit-magic-channels}.
In the same link, we have also provided a code to construct a convex polytope from a given state.
The code can also be used to construct convex polytopes for various states at the same time, and hence can be used to check whether a convex polytope corresponding to some state lies inside another convex polytope or not.
\end{remark}

\end{appendix}

\end{document}